\documentclass[preprint,aps,prd,floatfix]{revtex4-2}
\usepackage{graphicx}
\usepackage{amsfonts}
\usepackage{amsmath}
\usepackage{rotating}
\usepackage{amssymb,amsthm}
\usepackage{xcolor}
\usepackage[normalem]{ulem}

\newtheorem{thm}{Theorem}[section]
\newtheorem{cor}[thm]{Corollary}
\newtheorem{lem}[thm]{Lemma}

\theoremstyle{definition}
\newtheorem{defn}[thm]{Definition}
\theoremstyle{remark}
\newtheorem{rem}[thm]{Remark}

\newcommand{\mbold}[1]{\mbox{\boldmath{\ensuremath{#1}}}}

\def\beq{\begin{eqnarray}}
\def\eeq{\end{eqnarray}}

\def \bell {\mbox{{\mbold\ell}}}
\def \bn {\mbox{{\bf n}}}
\def \bm {\mbox{{\bf m}}}
\def \bh {\mbox{{\bf h}}}
\def \bomega {\mbox{{\mbold \omega}}}

\def \bV {\mbox{{ {\bf V}}}}
\def \bA {\mbox{{ {\bf A}}}}

\begin{document}

%%-------------------------------------------------------------------------
%%----------------------   Top Matter    ----------------------------------
%%-------------------------------------------------------------------------

\title{Teleparallel geometries not characterized by their scalar polynomial torsion invariants}

\author {D. D. McNutt}
\email{david.d.mcnutt@uis.no}
\affiliation{ Faculty of Science and Technology, University of Stavanger, N-4036 Stavanger, Norway }

\author {A. A. Coley}
\email{aac@mathstat.dal.ca}
\affiliation{Department of Mathematics and Statistics, Dalhousie University, Halifax, Nova Scotia, Canada, B3H 3J5}

\author{R. J. \surname{van den Hoogen}}
\email{rvandenh@stfx.ca}
\affiliation{Department of Mathematics and Statistics, St. Francis Xavier University, Antigonish, Nova Scotia, Canada, B2G 2W5}

%\date{\today}

%%-------------------------------------------------------------------------
%%----------------------   Abstract      ----------------------------------
%%-------------------------------------------------------------------------

\begin{abstract}

A teleparallel geometry is an n-dimensional manifold equipped with a frame basis and an independent spin connection. For such a geometry, the curvature tensor vanishes and the torsion tensor is non-zero. A straightforward approach to characterizing teleparallel geometries is to compute scalar polynomial invariants constructed from the torsion tensor and its covariant derivatives. An open question has been whether the set of all scalar polynomial torsion invariants, $\mathcal{I}_T$ uniquely characterize a given teleparallel geometry. In this paper we show that the answer is no and construct the most general class of teleparallel geometries in four dimensions which cannot be characterized by $\mathcal{I}_T$. As a corollary we determine all teleparallel geometries which have vanishing scalar polynomial torsion invariants. 

\end{abstract}

\maketitle

\section{Introduction}

There continues to be interest in alternative theories
to General Relativity (GR) \cite{Nojiri_Odintsov2006,Capozziello_DeLaurentis_2011}. While a conservative extension to GR arises in theories based on Riemann-Cartan geometries, where the torsion tensor is permitted to be non-zero in addition to the curvature tensor, there are other possibilities. For example one could introduce a more general structure, such as in Finsler geometries where there are now several curvature and torsion tensors \cite{stavrinos1999some, Ikeda:2019ckp} or to consider geometries where the curvature tensor vanishes and either the torsion tensor or the non-metricity tensor are non-zero \cite{Conroy:2017yln}. One particular class of gravitational theories assumes that the dynamics of the gravitational field are no longer encoded within the metric and its corresponding Levi-Civita connection but are encoded through a coframe basis, and its corresponding field strengths represented by  the torsion  \cite{Li_Sotiriou_Barrow2010, Krssak2015, Bahamonde_Boehmer_Wright2015}. These gravitational models are commonly labelled as teleparallel theories of gravity. There exists a subclass of teleparallel theories of gravity that are dynamically equivalent to GR called the teleparallel equivalent to GR (TEGR),  in which the Lagrangian is based on a scalar, $T$, constructed from the torsion and differs from the Lagrangian of GR by a total derivative (implying the the field equations are formally equivalent). A common extension of TEGR, so-called $f(T)$ gravity, is based on a generalized Lagrangian where instead of having just a scalar $T$ in the Lagrangian, it is replaced with an arbitrary function of the scalar $T$.

We will consider covariant formulations of $f(T)$ gravity in a gauge invariant manner by assuming a spin connection that has zero curvature and which is not trivial. The most general spin-connection that satisfies this requirement is the {\it purely inertial spin-connection} \cite{Obukhov_Pereira2003,Obukhov_Rubilar2006,Lucas_Obukhov_Pereira2009,Aldrovandi_Pereira2013,Krssak:2018ywd}. With this understanding, the spin-connection only vanishes in a very special class of frames (``proper frames'') where all inertial effects are absent \cite{Obukhov_Pereira2003,Obukhov_Rubilar2006,Lucas_Obukhov_Pereira2009,Aldrovandi_Pereira2013,Krssak:2018ywd}. The main advantage of the {\it covariant} teleparallel gravity approach is that by using the purely inertial connection, the resulting teleparallel gravity theory embodied by the Lorentz covariant field equations, is locally Lorentz invariant \cite{Krssak_Saridakis2015,Krssak_Pereira2015}.

In teleparallel gravity, the coframe basis  together with a flat or inertial spin connection essentially replaces the metric and its corresponding Levi-Civita connection as the primary objects of study.
 Therefore, to proceed in building gravitational models and finding solutions, one requires a choice of coordinates, $x^\mu$,
a coframe basis, $\bh^a$,  and a flat or inertial  spin-connection, (possibly utilizing a proper frame basis in which the components of the spin-connection are trivial). Given the variety of choices to be made in the selection of coordinates, frames, and connections, different choices may yet yield equivalent solutions. The question therefore arises as to how can one determine whether two solutions, which appear different and employ different coordinates, co-frames and connections, are inequivalent.  Fortunately two solutions can be shown to be inequivalent by comparing the group of symmetries for each solution as was described in \cite{Coley:2019zld}. 

Even if two teleparallel geometries have the same group of symmetries  this does not imply that they are equivalent. To show that two teleparallel geometries are equivalent we must compute and compare invariant quantities associated with each teleparallel geometry. In \cite{Coley:2019zld}, we proposed a modification of the Cartan-Karlhede  equivalence algorithm (which is applicable to {\it any} teleparallel geometry, regardless of the field equations) to determine the equivalence of two teleparallel geometries. 

As an alternative to the calculation of symmetries or usage of the Cartan-Karlhede algorithm to partially determine equivalence, we can instead consider any scalar invariant produced from contractions of copies of the torsion tensor and its covariant derivatives (hereafter called torsion tensors) known as a {\it scalar polynomial torsion invariant} (TSPI). TSPIs have the useful property that they are invariant under Lorentz transformations and hence are frame invariant scalars. 
The set of all such TSPIs, $\mathcal{I}_T$, can be used to uniquely characterize (locally) most teleparallel geometries.  Those teleparallel geometries which fail to be uniquely characterized by the set of all TSPIs we will call $\mathcal{I}_T$-degenerate.

A teleparallel geomtery is $\mathcal{I}_T$-degenerate if there exists a deformation of the proper coframe $\bh^a_\tau$ such that $\bh^a_\tau$ is continuous in $\tau$, $\bh^a_0 = \bh^a$ and the limiting proper coframe as $\tau \to \infty $, $\bh^a_\infty $ is not diffeomorphic to $\bh^a$ but the set of TSPIs, $\mathcal{I}_T$, for $\bh^a$ and $\bh^a_\infty $ are identical. Thus, the proper coframe $\bh^a$ cannot be distinguished from $\bh^a_\tau$ using TSPIs. A more practical definition of $\mathcal{I}_T$-degeneracy can be stated in terms of the alignment classification of the torsion tensors using non-proper frames, and this will be discussed in section \ref{sec:Alignment}. 

In the following we will determine the entire class of $\mathcal{I}_T$-degenerate teleparallel geometries. The outline of the paper is as follows. Section \ref{sec:GeoFrame} gives a review of teleparallel geometries. In section \ref{sec:CKalg}, we outline the construction of Cartan invariants and TSPIs. In section \ref{sec:Alignment}  we give a concise review of the alignment classification and its relevance for describing $\mathcal{I}_T$-degenerate spacetimes, and we also define three useful choices of coframe for a given teleparallel geometry. We provide two explicit examples of $\mathcal{I}_T$-degenerate teleparallel geometries in section \ref{sec:Examples}. In section \ref{sec:Ideg}, we show that any $\mathcal{I}_T$-degenerate teleparallel geometry must belong to a class of teleparallel geometries that is an analogue of the degenerate Kundt class in GR with a restricted form for the spin-connection. In addition we fully describe the class of VSI$_T$ teleparallel geometries where all TSPIs vanish. In section \ref{sec:discussion} we review our results and discuss their implications.

\section{The Geometrical Framework} \label{sec:GeoFrame}
 
We use the following notation. Coordinate indices, which range from 1 to 4, are represented by $\mu, \nu, \ldots$ while the tangent space indices, which also range from 1 to 4, are labelled by $a,b,\ldots$.  Relative to the frame basis, covariant differentiation with respect to the connection is indicated by a vertical bar $T_{abc|e}$ or as $D_{e} T_{abc}$. We will also write $D_{\mu } = h^a_{~\mu} D_a$ and $\partial_{a } = h_{a}^{~\mu} \partial_\mu$.  Covariant differentiation with the coordinate basis will be written as $\nabla_{\mu} T_{\nu \lambda \gamma}$. The connection relative to the coframe basis, $$D_c \bh^a = -\omega^a_{~bc} \bh^b,$$ is related to the coordinate basis connection through the identity: 
\beq \omega^a_{~bc} = - h_b^{~\nu} h_c^{~\mu} ( \partial_{\mu} h^a_{~\nu} - \Gamma^{\lambda}_{~\nu \mu} h^a_{~\lambda}) = -h_b^{~\nu} h_c^{~\mu} \nabla_{\mu} h^a_{~\nu}. \eeq
Covariant differentiation with respect to the Levi-Civita connection, $\tilde{D}_a$, or $\tilde{\nabla}_\mu$ is indicated with a semi-colon: $T_{abc;e}$ and $\tilde{\nabla}_{a} = h_a^{~\mu} \tilde{\nabla}_\mu$.  Round and square brackets denote symmetrization and anti-symmetrization, respectively, where underlined indices are not included in any symmetrization.

Let $M$ be a 4D differentiable manifold with coordinates $x^\mu$. A basis for the tangent space at each point in the manifold can be expressed as $\bh_a=h_a^{~\mu}\,\partial_\mu $ which we call the frame, while the corresponding dual space of one-forms are $\bh^a=h^a_{~\mu}\,dx^\mu$ which we will call the coframe. The associated vielbein expressions $h_a^{~\mu}$ and $h^a_{~\mu}$ are non-singular matrices of functions of $x^\mu$ and provide a mechanism to express the components of tensors in terms of the coordinate basis $\{\partial_\mu,dx^\nu\}$ or in terms of the tangent space basis $\{\bh_a,\bh^b\}$. We shall complete the geometrical setting by assuming the existence of a metric and a spin connection one-form
\begin{eqnarray}
\bf{g} &=& g_{\mu\nu}\,dx^\mu \otimes dx^\nu = g_{ab}\,\bh^a \otimes \bh^b\\
\bomega^a_{~b} &=& \omega^a_{~b\nu} \,dx^\nu = \omega^a_{~bc} \,\bh^c.
\end{eqnarray}

In general, the geometrical quantities ${\bf g},\bh^a,\bomega^{a}_{\phantom{a}b}$ are independent.  However, assuming the Principle of Relativity implies that the geometry is invariant under local $GL(4,\mathbb{R})$ gauge transformations of the coframe (or frame) allows for some efficiencies.  We can use this gauge freedom to express the components of the metric with respect to the tangent space basis in a {\em{preferred}} form. Two useful possibilities, are the
\begin{eqnarray} 
\text{\bf Orthonormal gauge} \qquad && g_{ab} \to \eta_{ab} =  \mbox{Diag}[-1,1,1,1] \qquad \text{and\ the} \\
\text{\bf Complex Null gauge} \qquad && g_{ab} \to \eta_{ab} = \left[ \begin{array}{cccc} 0 & -1 & 0 & 0 \\ -1 & 0 & 0 & 0 \\ 0 & 0 & 0 & 1 \\ 0 & 0 & 1 & 0 \end{array}\right].
\end{eqnarray}

Taking an orthonormal coframe, $\{\bh^a\}$, the complex null coframe, $\{ \bn, \bell,  \bar{\bm}, \bm\}$ is constructed as
\beq \bell = \frac{1}{\sqrt{2}} (\bh^1 - \bh^2), \bn = \frac{1}{\sqrt{2}} (\bh^1 - \bh^2), \bm =\frac{1}{\sqrt{2}} (\bh^3 - i \bh^4). \label{NullCoframe} \eeq 
\noindent We shall denote the dual complex null frame as $\{ {\bf L}, {\bf N}, {\bf M}, \bar{{\bf M}}\}$. In either case we label the Minkowski metric as $\eta_{ab}$ while the corresponding coordinates of the metric are $g_{\mu\nu}=\eta_{ab}h^{a}_{~\mu}h^b_{~\nu}$, illustrating the use of the vielbein functions.  

The metric components, $\eta_{ab}$, while fixed, are still invariant under the residual $SO^{+}(1,3)$ subgroup of $GL(4,\mathbb{R})$ gauge transformations of the co-frame.  These local Lorentz transformations of the coframe  \begin{equation}
\bh^a \to \Lambda^a_{~b} \bh^b.
\end{equation} 
which leave the metric unchanged, are represented by the matrix $\Lambda^a_{~b}(x^\mu)$,  with a corresponding inverse of $\left(\Lambda^{-1}\right)^a_{\ b}=\Lambda_b^{\ a}$. Therefore, upon a choice of gauge, any field equations and any tensorial expressions must transform homogeneously under these Lorentz transformations. Note, all connections, even those with a frame where all components of the spin-connection are zero, transform non-homogeneously under linear transformations as
\begin{equation}
\omega^a_{~bc} \to\left(\Lambda^a_{~d}\omega^{d}_{~ef}\Lambda_b^{~e}+\Lambda^a_{~d}\partial_f\Lambda_b^{~d}\right)\Lambda_{c}^{~f}.
\end{equation}

With the additional assumption that the spin-connection be metric compatible, i.e., $D_{c} g_{ab} =0$, the spin-connection one-form becomes anti-symmetric, $\bomega_{(ab)}=0$
(which we assume hereafter).  The torsion and the curvature tensors associated with the coframe and spin-connection are:
\begin{eqnarray}
T^a_{\phantom{a}\mu\nu}&=&\partial_\mu h^a_{\phantom{a}\nu}-\partial_\nu h^a_{\phantom{a}\mu}+\omega^a_{\phantom{a}b\mu}h^b_{\phantom{a}\nu}-\omega^a_{\phantom{a}b\nu}h^b_{\phantom{a}\mu},\\
R^a_{\phantom{a}b\mu\nu} &=& \partial_\mu \omega^a_{\phantom{a}b\nu}-\partial_\nu \omega^a_{\phantom{a}b\mu}+\omega^a_{\phantom{a}c\mu}\omega^c_{\phantom{a}b\nu}-\omega^a_{\phantom{a}c\nu}\omega^c_{\phantom{a}b\mu}.\label{curvature}
\end{eqnarray}
Since we are interested in teleparallel geometries, the curvature tensor $R^a_{\phantom{a}b\mu\nu}$, is assumed to be identically zero.  In which case, equation \eqref{curvature} can be solved for the spin connection
\begin{equation}
\omega^a_{~bc}=\Lambda^a_{~d}\partial_c(\Lambda_b^{~d})
\end{equation}
for some matrix $\Lambda^a_{~b}(x^\mu) \in SO^{+}(1,3)$. This zero curvature or flat connection is also called an inertial connection \cite{Krssak:2018ywd}.

For a given teleparallel geometry, $(M, \bh^a, \bomega^a_{~b})$ it is always possible to apply a Lorentz transformation to set $\bomega^a_{~b} = {\bf 0}$ \cite{Obukhov_Pereira2003,Obukhov_Rubilar2006,Lucas_Obukhov_Pereira2009,Aldrovandi_Pereira2013,Krssak:2018ywd}. This motivates the definition of a special class of frames, known as {\bf proper frames}, where the components of the spin connection vanish,  so that $\nabla_\mu h^c_{~\nu} = 0$ which is equivalent to $D_a \bh^b = {\bf 0}$. It is always possible to represent any exact solution to teleparallel gravity in a proper frame. Relative to this frame, only global Lorentz transformations will preserve the proper frame condition.

\section{Torsion Invariants} \label{sec:CKalg}

\subsubsection{The modified Cartan-Karlhede algorithm for torsion} 

A modification of the Cartan-Karlhede algorithm was introduced for teleparallel geometries in \cite{Coley:2019zld}. In this section we will briefly review the approach and define terminology that will be used in the remainder of the paper. We will denote $\mathcal{T}^q$ as the set of components of the torsion tensor and the covariant derivatives of the torsion tensor up to the $q$-$th$ covariant derivative, 
\beq \mathcal{T}^q = \{ T_{abc}, T_{abc|d_1}, \ldots T_{abc|d_1 \ldots d_q} \}. \eeq   For any teleparallel geometry the Cartan-Karlhede algorithm may be summarized by the following steps:

\begin{enumerate}
\item Set the order of differentiation $q$ to 0.
\item Calculate $\mathcal{T}^q$.
\item Determine the canonical form of the $q$-th covariant derivative of the torsion tensor.
\item Fix the frame as much as possible, using this canonical form, and record the remaining frame transformations that preserve this canonical form (the group of allowed frame transformations is the {\it linear isotropy group $H_q$}). The dimension of $H_q$ is the dimension of the remaining {\it vertical} freedom of the frame bundle.
\item Find the number $t_q$ of independent functions of spacetime position in $\mathcal{T}^q$ in the canonical form. This tells us the remaining {\it horizontal} freedom.
\item If the dimension of $H_q$ and number of independent functions are the same as in the previous step, let $p+1=q$, and the algorithm terminates; if they differ (or if $q=0$), increase $q$ by 1 and go to step 2. 
\end{enumerate}

\noindent The resulting non-zero components of $\mathcal{T}^p$ constitute the {\it Cartan invariants} and we will denote them as $\mathcal{T} \equiv \mathcal{T}^{p+1}$ so that 
\beq \mathcal{T} = \{ T_{abc}, T_{abc|d_1}, \ldots T_{abc|d_1 \ldots d_{p+1}} \}. \label{CartanSet} \eeq 
We will refer to the invariants constructed from, or equal to, Cartan invariants of any order as {\it extended Cartan invariants}. 

This algorithm will pick out a preferred frame for a given teleparallel geometry which is adapted to the form of the torsion tensor and its covariant derivatives. This motivates the following definitions:

\begin{defn}
Any frame determined in a coordinate-independent manner using the Lorentz frame transformations to fix the torsion tensor and its covariant derivatives (hereafter, we will refer to these tensors as torsion tensors) into canonical forms is called an {\bf invariantly defined Cartan frame up to linear isotropy $H_p$} which we shall refer to as an {\bf Cartan frame} in brief here. 
\end{defn}

In general, there is non-trivial linear isotropy and so the frame may not be completely determined. We note that the existence of non-trivial linear isotropy in the CK algorithm is necessary and sufficient for a non-trivial isotropy subgroup in the group of affine frame symmetries \cite{Coley:2019zld}.

If the linear isotropy group is trivial, $H_{p} = \emptyset$ then the Cartan frame is fully determined by the teleparallel geometry. 

\begin{defn} 
If a Cartan frame can be constructed by fixing all of the parameters of the  Lorentz frame transformations, then such a frame is called an {\bf invariant frame}: 
\end{defn}

\noindent We note that the linear isotropy is inherited by all geometric objects, including the spin connection.
If there is some remaining Lorentz gauge freedom, it is only possible to determine an invariant frame through prolongation of the manifold \cite{olver1995}, which is not ideal for the study of gravitational theories. 

\begin{rem}
We remark that a {\it proper frame} is not necessarily a {\it Cartan frame} as its definition relies on the vanishing of the spin-connection which is not Lorentz covariant. It is possible to choose the proper frame and then completely fix the remaining Lorentz parameters.
\end{rem}

For sufficiently smooth frames and spin-connection, the result of the algorithm is a set of Cartan scalars providing a unique local geometric characterization of the teleparallel geometry. The 4D spacetime is characterized by the canonical form used for the torsion tensors, the two discrete sequences arising from the successive linear isotropy groups and the independent function counts, and the values of the (non-zero) Cartan invariants. As there are $t_p$ essential spacetime coordinates, the remaining $n-t_p$ are ignorable, and so the dimension of the affine frame symmetry isotropy group (hereafter called the isotropy group) of the spacetime will be $s=\dim(H_p)$ and the affine frame symmetry group has dimension: \beq r=s+n-t_p. \label{rnumber} \eeq

\subsection{Scalar torsion polynomial invariants}

Just as in the case of Riemannian geometries based on the metric alone, there exists a set of scalar invariants that can be constructed from the torsion tensor and its covariant derivatives. These invariants, called {\it  scalar polynomial torsion invariants} (TSPIs), are constructed from full contractions of tensors built from the torsion tensor and its covariant derivatives with respect to the spin connection. We will denote the set of all such TSPIs as $\mathcal{I}_T$. While this set is functionally infinite there is a finite dimensional basis for $\mathcal{I}_T$.  We define two special cases: Vanishing Scalar Invariant teleparallel geometries, ($VSI_T$) and Constant Scalar Invariant teleparallel geometries, ($CSI_T$) as a geometry in which all of the TSPIs are zero and constant, respectively.

A natural question to ask is whether the TSPIs are able to uniquely characterize teleparallel geometries, or if there are  cases where the TSPIs are unable to distinguish between different teleparallel geometries. For example, it is possible that there are teleparallel geometries where all TSPIs vanish, and hence have the VSI$_T$ property \cite{{Higher}}. For such VSI$_T$ teleparallel geometries, it would  then be impossible to distinguish them from Minkowski space using TSPIs. A related question is to ask when are the Cartan invariants uniquely determined by TSPIs \cite{Coley:2009eb} and when is the frame bundle completely characterized by TSPIs.

An important example of an TSPI is the torsion scalar:
\begin{equation}
T =
\frac{1}{4} \; T^a_{\phantom{a}bc} \, T_a^{\phantom{a}bc} +
\frac{1}{2} \; T^a_{\phantom{a}bc} \, T^{c b}_{\phantom{aa}a} -
               T^a_{\phantom{a}c a} \, T^{b c}_{\phantom{aa}b},
\label{TeleLagra}
\end{equation}
that is employed in TEGR and $f(T)$ theories of gravity.  Generalizations of TEGR and $f(T)$ theories of gravity, can be constructed using other TSPIs:
\beq \mathcal{L}_{Grav} (h^a_{~\mu}, \omega^a_{~b \mu} ) = \frac{h}{2	\kappa} f( T, T_1, T_2, \ldots T_n), T_i \in \mathcal{I}_T,~ i \in [1,n]. \eeq
By including a matter Lagrangian and varying the resulting action, we can generate new teleparallel gravity theories that are distinct from the $f(T)$ theories. We emphasize that the geometric or kinematic structures on which a physical theory is constructed will, in general, be independent of the physical and dynamical features of the theory. This is illustrated by the fact that for a given teleparallel geometry we may impose the action of TEGR in addition to the action of other $f(T)$ theories.

\section{The alignment classification and TSPIs} \label{sec:Alignment}

In order to determine the class of teleparallel geometries which cannot be classified by their TSPIs, we must examine a frame based approach to the classification of tensors in Lorentzian manifolds. Relative to a given null complex coframe $\{  \bn, \bell, \bar{\bm}, \bm \}$, the abelian subgroup of the group $SO^{+}(1, 3)$ consists of a boost,
\beq
(\bell, \bn) \to (e^{\lambda} \bell, e^{-\lambda} \bn), \label{badboost}
\eeq
\noindent where $\lambda$ is real-valued.
This gives rise to the concept of a  {\it boost weight} $b \in \mathbb{Z}$ such that for an arbitrary component of a rank $r$ tensor ${\bf \hat{T}}$ with respect to the null coframe \eqref{NullCoframe}, a boost in each of the $n$ planes, $(\bell, \bn)$, gives the transformation \cite{OrtaggioPravdaPravdova:2013}:
\beq \hat{T}_{a_1 ... a_r} \to e^{b\lambda} \hat{T}_{a_1 ... a_r},  \eeq
\noindent where the indices $a_i, \ldots a_r$ range from $1$ to $4$, and the integer $b$ is the boost weight vector of the component $\hat{T}_{a_1 ... a_r}$ which records the difference in the number of appearances of $\bell$ and $\bn$, respectively, in the associated tensor product of a given component $\hat{T}_{a_1 ... a_r}$. We can write the tensor ${\bf \hat{T}}$ in the following decomposition:
\beq
{\bf \hat{T}} = \sum\limits_{b} ({\bf \hat{T}})_{b}.
\eeq
\noindent Here $({\bf \hat{T}})_{b}$ denotes the projection onto the subspace of components of boost weight $b$.

Defining the maximum boost weight of a tensor, ${\bf \hat{T}}$, for a null direction $\bell$ as the boost order, we denote this as $\mathcal{B}_{{\bf \hat{T}}}(\bell)$. For a given null direction $\bell$, $\mathcal{B}_{{\bf \hat{T}}} (\bell)$ remains unchanged under boosts, spatial rotations and null rotations about $\bell$, implying that the choice of $\bn$ will not affect the integer value of $\mathcal{B}_{{\bf \hat{T}}} (\bell)$. This implies the definition is dependent on the choice of $\bell$. Defining $B_{{\bf \hat{T}}}$ as the maximum value of $\mathcal{B}_{{\bf \hat{T}}} (\bell)$ over all possible choices of $\bell$, the existence of a $\bell$ with $\mathcal{B}_{{\bf \hat{T}}} (\bell) <B_{{\bf \hat{T}}}$ is an invariant property of the tensor ${\bf \hat{T}}$. We will say $\bell$ is ${\bf \hat{T}}$-aligned if $\mathcal{B}_{{\bf \hat{T}}}(\bell) < B_{{\bf \hat{T}}}$.

The torsion tensor and any rank two tensor, ${\bf T}$, can be broadly classified into five {\it alignment types}: if for all null directions $\bell$, $\mathcal{B}_{{\bf T}} (\bell)=2$ then ${\bf T}$ is of alignment type $G$, if there exists an $\bell$ such that $\mathcal{B}_{{\bf T}} (\bell) = 1, 0,-1,-2$ then ${\bf T}$ is of alignment type $I, II, III,$ or $N$, respectively, while if {\bf T} vanishes then it belongs to alignment type $O$. We will define the {\it secondary alignment type} as the alignment classification of the remaining null direction $\bn$. For higher rank tensors, like the covariant derivatives of the torsion tensor, the alignment types are still applicable despite the possibility that $|\mathcal{B}_{{\bf T}} (\bell)|$ may be greater than two; in particular, we will be interested in tensors which have alignment type {\bf II} or more special.  

Using the boost weight decomposition, we can introduce properties to classify tensors in a similar manner to the alignment classification for general pseudo-Riemannian spaces \cite{OrtaggioPravdaPravdova:2013,HHY}. 

\begin{thm} \label{thm:HHY1}
	A tensor of arbitrary rank is not characterized by its invariants if and only if it is of alignment type {\bf II} or more special.
\end{thm}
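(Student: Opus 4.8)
The statement is an equivalence, and the plan is to prove the two implications by quite different means: the ``if'' direction (alignment type II or more special $\Rightarrow$ not characterized) reduces to a boost-weight computation, while the ``only if'' direction is cleanest in contrapositive form and rests on a closed-orbit argument. For the ``if'' direction the starting observation is the elementary bookkeeping fact that, in any complete contraction of copies of a tensor (and of the metric $\eta_{ab}$) down to a scalar, every non-vanishing term has total boost weight zero: each contracted index pair is either an $\bell$--$\bn$ pairing, contributing boost weights $+1$ and $-1$, or an $\bm$--$\bar{\bm}$ pairing, contributing $0$ and $0$. Hence only the boost-weight-zero part $({\bf T})_0$ of a tensor can enter any of its polynomial invariants. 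If ${\bf T}$ is of alignment type II or more special with respect to a null direction $\bell$, then in an adapted null coframe ${\bf T}$ has no positive boost weight components, so its entire set $\mathcal{I}_T$ is computed from $({\bf T})_0$ alone, and vanishes identically when ${\bf T}$ is type III or more special, which is precisely the VSI$_T$ situation isolated later in the paper. I would then consider the continuous one-parameter family produced by the boost \eqref{badboost} with $\lambda=\tau$: each member is a Lorentz transform of ${\bf T}$, hence carries the same $\mathcal{I}_T$, and as $\tau\to\infty$ the negative boost weight components are rescaled to zero while $({\bf T})_0$ is untouched, so the family converges to $({\bf T})_0$. By continuity of the polynomial invariants $({\bf T})_0$ carries the same $\mathcal{I}_T$ as ${\bf T}$; and whenever ${\bf T}$ has a nonzero negative boost weight component, $({\bf T})_0$ is of strictly more special alignment type (its primary and secondary alignment types are both II), hence not Lorentz-equivalent to ${\bf T}$, so ${\bf T}$ is not characterized by $\mathcal{I}_T$. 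The remaining case, in which ${\bf T}$ already equals its own boost-weight-zero part, is the delicate one and is treated in \cite{HHY}.

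For the ``only if'' direction I would argue the contrapositive: assume ${\bf T}$ is of type I or G, so that $\mathcal{B}_{{\bf T}}(\bell)\ge 1$ for \emph{every} null direction $\bell$, and show ${\bf T}$ is characterized by $\mathcal{I}_T$. The idea is to identify ``characterized by its polynomial invariants'' with ``the $SO^+(1,3)$ orbit of ${\bf T}$ is closed'': if the orbit is not closed, a tensor on its boundary carries the same invariants but is inequivalent, and conversely closed orbits of the reductive group $SO^+(1,3)$ are separated by polynomial invariants. It then suffices to show that type I or G implies a closed orbit, which I would establish with a Hilbert--Mumford-type criterion: up to conjugacy the only non-trivial one-parameter subgroups of $SO^+(1,3)$ are the boosts along a null direction $\bell$, and under such a boost the component of ${\bf T}$ of highest positive boost weight along $\bell$ --- nonzero for every $\bell$ by hypothesis --- diverges, so ${\bf T}$ admits no limit under any one-parameter subgroup and its orbit is closed.

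I expect the technical heart to be the equivalence between ``characterized by the polynomial invariants'' and ``closed Lorentz orbit,'' and, inside it, the separation of closed orbits by their invariants: since $SO^+(1,3)$ is non-compact this is the real-analytic analogue of Mumford's geometric-invariant-theory separation result, provable via the theory of minimal vectors for real reductive groups, and upgrading ``same invariants'' to ``same orbit throughout a neighbourhood'' requires controlling the stabiliser of ${\bf T}$. A secondary difficulty, on the ``if'' side, is the clean disposal of the borderline case in which the boost family acts trivially, where the statement --- to hold --- forces the boost-weight-zero part itself to be non-generic under the little group of the aligned null pair; the Hilbert--Mumford step also relies on the (standard) classification of one-parameter subgroups of $SO^+(1,3)$ up to conjugacy.
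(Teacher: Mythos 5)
The paper itself offers no proof of Theorem \ref{thm:HHY1}: it is quoted from the literature \cite{OrtaggioPravdaPravdova:2013,HHY}, so there is no internal argument to compare against, and your overall plan (boost limits for the ``if'' direction, real geometric invariant theory, minimal vectors and a Hilbert--Mumford/Birkes step for the ``only if'' direction) is indeed the strategy of the cited works. But two steps as written are genuinely wrong. In the ``if'' direction you claim that whenever ${\bf T}$ has a nonzero negative boost weight component, $({\bf T})_0$ is of strictly more special type and hence not Lorentz-equivalent to ${\bf T}$. This is false: a tensor that is purely boost weight zero in some frame (type {\bf D}), rewritten in a frame in which only $\bell$ is aligned (e.g.\ after a null rotation about $\bell$), acquires nonzero negative boost weight components, yet the boost limit lands back in its own orbit. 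So the delicate case is not ``${\bf T}$ equals $({\bf T})_0$'' but ``${\bf T}$ is Lorentz-equivalent to a purely boost-weight-zero tensor,'' and there the boost deformation proves nothing; one must instead deform by \emph{adding} an unremovable negative boost weight piece (which leaves all invariants of a type {\bf II} tensor unchanged), and this is precisely what makes the theorem, read with the deformation-based notion of degeneracy the paper uses, cover the type {\bf D} case at all. Relatedly, your opening assertion that only $({\bf T})_0$ can enter any polynomial invariant is a non sequitur for a general tensor (positive and negative boost weight factors can pair to total weight zero in a contraction); it is only valid once type {\bf II} is assumed, which is where you apply it.

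In the ``only if'' direction the identification ``characterized by invariants $\Leftrightarrow$ closed $SO^+(1,3)$-orbit,'' together with the claim that closed orbits are separated by polynomial invariants, does not hold as stated. Distinct closed real orbits can carry identical polynomial invariants (future- versus past-pointing timelike vectors; diagonalizable symmetric tensors differing only in which eigenvalue lies on the timelike eigendirection), and conversely a tensor with a closed orbit (e.g.\ type {\bf D}) can still fail to be characterized in the strong sense, since inequivalent type {\bf II} tensors share its invariants. The repair, as in the cited literature, is to work with the deformation definition of ``characterized'' (the one this paper adopts for $\mathcal{I}_T$-degeneracy) and to show, via a Kempf--Ness/Richardson--Slodowy slice argument, that an invariant-preserving deformation cannot leave the orbit of a tensor that is of type {\bf G} or {\bf I} with respect to every null direction; your Hilbert--Mumford/Birkes observation (split rank one, so every nontrivial real one-parameter subgroup is conjugate to a boost, under which the leading positive boost weight component diverges) is the right ingredient for orbit closedness, but closedness alone does not close the argument.
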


\noindent Using this theorem we can characterize $\mathcal{I}_T$-degenerate teleparallel geometries using the boost weight decomposition.

\begin{cor} \label{cor:Ideg}
	A teleparallel geometry is $\mathcal{I}_T$-degenerate if and only if the torsion tensors are of alignment type {\bf II} or more special relative to some common null coframe.
\end{cor}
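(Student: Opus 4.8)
The plan is to reduce Corollary~\ref{cor:Ideg} to Theorem~\ref{thm:HHY1}, using the modified Cartan--Karlhede algorithm of Section~\ref{sec:CKalg} as the bridge. That algorithm shows a teleparallel geometry is locally characterized by the finite set of torsion tensors $\mathcal{T}$ together with the discrete data generated along the way (canonical forms, the isotropy sequence, the function counts); equivalently, two geometries are locally equivalent precisely when their sets $\mathcal{T}$ can be brought to common values by a single Lorentz transformation. On the other hand $\mathcal{I}_T$ is, by construction, exactly the algebra of $SO^{+}(1,3)$--invariant scalar polynomials in the components of the members of $\mathcal{T}$. Hence ``$\mathcal{I}_T$-degenerate'' is equivalent to ``the collection $\mathcal{T}$ is not characterized by its scalar polynomial invariants,'' and it suffices to prove the alignment statement for the family $\mathcal{T}$.

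For $(\Leftarrow)$ I would argue constructively. Suppose every member of $\mathcal{T}$ is of alignment type \textbf{II} or more special relative to one common null coframe $\{\bn,\bell,\bar\bm,\bm\}$, so that in this coframe each torsion tensor has vanishing projection onto every positive boost weight. Apply the boosts $(\bell,\bn)\mapsto(\tau\bell,\tau^{-1}\bn)$, $\tau\ge 1$, together with the induced inhomogeneous transformation of $\bomega^a_{~b}$; for each finite $\tau$ this is a local Lorentz (gauge) transformation, so $\bh^a_\tau$ defines the same teleparallel geometry, in particular with the same TSPIs. A component of boost weight $b$ scales by $\tau^{b}$, so letting $\tau\to\infty$ kills every negative boost weight projection while leaving the boost weight zero parts fixed: the limit $\bh^a_\infty$ is a well-defined teleparallel geometry whose torsion tensors are the boost weight zero parts of the originals. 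Every TSPI is boost invariant, so $I(\bh^a_\tau)=I(\bh^a)$ for all finite $\tau$ and, by continuity, $I(\bh^a_\infty)=I(\bh^a)$ (equivalently, a TSPI of type \textbf{II} tensors only involves boost weight zero components). When $\bh^a_\infty\not\cong\bh^a$ this is the required deformation; the residual case, in which the boost weight zero truncation already returns the geometry, is handled by appealing to Theorem~\ref{thm:HHY1} itself, extended to the family $\mathcal{T}$ (cf.~\cite{HHY}), which guarantees a non-diffeomorphic geometry with the same TSPIs.

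For $(\Rightarrow)$ I would use the contrapositive: assume no common null coframe makes every member of $\mathcal{T}$ type \textbf{II} or more special. Then the family $\mathcal{T}$, regarded as a single generalized tensor (the finite direct sum of the torsion tensor and its covariant derivatives), is of alignment type \textbf{I} or \textbf{G}, and by the family version of Theorem~\ref{thm:HHY1} it is characterized by its scalar polynomial invariants, i.e.\ by $\mathcal{I}_T$. Feeding this into the Cartan--Karlhede characterization, the canonical frame, isotropy sequence, and Cartan invariants of $\mathcal{T}$ are all recovered from $\mathcal{I}_T$, hence so is the geometry, and it is not $\mathcal{I}_T$-degenerate.

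I expect the main obstacle to be the passage, used in both directions, from the single-tensor statement of Theorem~\ref{thm:HHY1} to the corresponding statement for the whole family $\mathcal{T}$ with alignment measured against one common coframe: one must verify that ``type \textbf{II} or more special'' and ``characterized by invariants'' behave correctly under direct sums of tensors of differing rank, and in particular that the subtle case in which each individual torsion tensor is type \textbf{II} or more special but relative to mutually misaligned null directions falls correctly on the non-degenerate side --- there the distinct, invariantly defined aligned directions rigidify the frame. A secondary point needing care is the bridge lemma that it is the Cartan invariants of $\mathcal{T}$, and not merely the TSPIs, that the modified Cartan--Karlhede algorithm consumes, so that ``$\mathcal{T}$ characterized by $\mathcal{I}_T$'' genuinely upgrades to ``geometry characterized by $\mathcal{I}_T$'' (this also absorbs the fact that the flat spin connection reconstructed from the frame carries no information beyond $\mathcal{T}$). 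Finally one should reconcile the non-proper-frame boost deformation above with the proper-coframe deformation appearing in the definition of $\mathcal{I}_T$-degeneracy, via the existence of a proper frame at each $\tau$.
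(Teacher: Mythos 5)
Your proposal is correct and follows essentially the same route as the paper: the paper gives no separate proof, presenting Corollary~\ref{cor:Ideg} as an immediate consequence of Theorem~\ref{thm:HHY1} applied to the torsion tensor and its covariant derivatives regarded as a single aligned family, which is exactly your reduction. Your boost-limit construction and the Cartan--Karlhede bridge merely spell out the standard argument behind that theorem (cf.~\cite{HHY,CHPP2009}), so they add detail rather than a different method.
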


In analogy with the results of $\mathcal{I}_{\tilde{R}}$-degenerate Lorentzian spacetimes \cite{CHPP2009}, we will suppose that there exists a null complex coframe, $\{ \bn, \bell,  \bar{\bm}, \bm\}$ with $\bell$ aligned with the torsion tensor and the first covariant derivative of $\bell$ such that both are of alignment type {\bf II}, and  we assume
\begin{equation}
\bell_{a|b} = L \bell_{a}\bell_{b} +L_i(\bell_{a} m^i_{b} + m^i_{a}\bell_{b})\label{ell}
\end{equation}
\noindent where $L$ and $L_i$ ($i=1,2$) are arbitrary functions. Then there always exists a null coordinates $u$ and $v$ such that $\bh^2 = \bell  \equiv du$ and the corresponding frame member is $\bh_1 = {\bf L} \equiv \frac{\partial}{\partial v}$. Relative to the coordinates $\{v,u,x,y\}$, the null coframe takes the form:
\beq \bh^a = \left[ \begin{array}{c} \bn \\ \bell \\ \bar{\bm} \\ \bm \end{array} \right] = \left[ \begin{array}{c}   dv + H du + W_1 dx + W_2 dy \\ du \\ {P}(dx-idy)  \\ {P}(dx+idy) \end{array} \right], \label{AKundtframe}\eeq
\noindent where $H(u,v,x,y)$, $W_1(u,v,x,y)$ and $W_2(u,v,x,y)$ are arbitrary functions and $P=P(u,x,y)$. 

We will define an {\bf aligned Kundt coframe} as a null coframe \eqref{AKundtframe} together with a non-trivial spin connection such that $\bell$ is automatically aligned with the torsion tensor and the covariant derivative of $\bell$. A teleparallel geometry with this choice of frame may not be $\mathcal{I}_T$-degenerate as the covariant derivative of the torsion tensor could be of alignment type {\bf I}. While this choice of coframe puts strong conditions on the spin connection: \beq \omega^1_{~11} = \omega^3_{~31}, \bomega^2_{~3} = {\bf 0}, \bomega^2_{~4} = {\bf 0}, \bomega^3_{~1} = {\bf 0}, \bomega^4_{~1} = {\bf 0}, \label{SCzeroAlignedKundt}\eeq or equivalently in terms of the Newman-Penrose (NP) scalars, $\kappa = \rho = \sigma = \tau = \epsilon = 0$, this will not specify the remaining part of the spin connection. To fully describe the teleparallel geometry we must impose additional conditions on the spin connection. 

As an alternative, we will begin from the assumption of a proper null coframe for an $\mathcal{I}_T$-degenerate teleparallel geometry where the torsion tensors may not be explicitly of alignment type {\bf II}. As this geometry is $\mathcal{I}_T$-degenerate, from corollary \ref{cor:Ideg} there exists a null direction $\bell'$ for which the torsion tensors are all of alignment type {\bf II} relative to a coframe adapted to this direction. We can always apply a null rotation about $\bn$ to align the torsion tensors with this particular null direction. As a helpful definition, we will call any frame related to a proper frame by a null rotation about $\bn$, a {\bf $\bell$-proper frame}. Relative to the $\bell$-proper frame, the corresponding non-zero spin connection one-form components are: 
\beq \bomega^2_{~3}, \bomega^2_{~4} \label{lproperspin} \eeq
\noindent and their antisymmetric counterparts. Equivalently the NP scalars $\kappa, \rho, \sigma$ and $\tau$ are non-trivial.

Using the $\bell$-{\bf proper frame}, we can determine the subclass of teleparallel geometries in which there exists a frame, called a {\bf degenerate Kundt frame}, where all of the positive b.w. components of the torsion tensor and spin connection (and hence the components of all the covariant derivatives of the torsion tensor) are simultaneously zero. That is, the torsion tensors are all of type {\bf II} in the same frame.

\subsection{Irreducible parts of the torsion tensor}

The torsion two-form, $T^a$, can be expanded as
\begin{equation}
T^a=\frac{1}{2}T^a_{\phantom{a}bc} \,h^b \wedge h^c
\end{equation}
where $T^a_{~bc}$ has $24$ independent components. Under the local Lorentz group, the torsion tensor can be decomposed into three irreducible parts \cite{Hehl_McCrea_Mielke_Neeman1995}:
\beq T_{abc} = \frac23 (t_{abc} - t_{acb}) - \frac13 (g_{ab} V_c - g_{ac} V_b) + \epsilon_{abcd} A^d.\label{TorsionDecomp} \eeq

\noindent Here ${\bf V}$ denotes the vector part which is the trace of the torsion tensor, which has been written as a one-form:
\beq V_a = T^b_{~ba}, \label{Vtor} \eeq
\noindent Lowering the index of the torsion tensor and applying the Hodge dual of the resulting tensor gives the axial part, ${\bf A}$:
\beq A^a = \frac16 \epsilon^{abcd}T_{bcd}. \label{Ator} \eeq
\noindent Finally, we can construct the purely tensor part, ${\bf t}$:
\beq t_{(ab)c} = \frac12 (T_{abc}+ T_{bac}) -\frac16 (g_{ca} V_b + g_{cb} V_a) + \frac13 g_{ab} V_c. \label{Ttor} \eeq
\noindent These tensors are known as the {\it vector part, axial part, and tensor part } of the torsion tensor. The tensor part satisfies the following identities:
\beq \begin{aligned}
& g^{ab} t_{(ab)c} = 0, ~t_{(ab)c} = t_{(ba)c},~t_{(ab)c} + t_{(bc)a} + t_{(ca)b} = 0. \end{aligned} \label{tensortorsionids} \eeq

\noindent A simple counting argument shows that the tensor part of the torsion will have 16 algebraically independent components. Including the components of the vector part and axial part, this gives 24 components for the torsion tensor.

Working with a complex null coframe, $\{ \bn, \bell, \bar{\bm}, \bm \}$, \noindent the following basis of algebraically independent components can be chosen for the purely tensorial part of torsion:
\beq \begin{aligned} & t_{(12)1},~ t_{(13)1},~ t_{(14)1},~ t_{(22)1},~ t_{(23)1},~ t_{(24)1},~ t_{(33)1},~ t_{(44)1},~ \\
& t_{(13)2},~ t_{(14)2},~ t_{(23)2},~ t_{(24)2},~ t_{(33)2},~ t_{(44)2},~ t_{(14)3},~ t_{(24)3},  \end{aligned}\eeq

\noindent the remaining components of $t_{(ab)c}$ have algebraic dependencies arising from the trace-free and cyclic identities in equation \eqref{tensortorsionids} and share the same boost weight as the algebraically independent components. The boost weight of the algebraically independent components are given in figure 1

\begin{figure} \label{fig:bwd1torsion}
\beq \begin{array}{c|c} b.w. & Components \\ \hline
2 & t_{(13)1}, t_{(14)1} \\
1 &  t_{(12)1}, t_{(33)1}, t_{(44)1}, t_{(14)3} \\
0 & t_{(23)1}, t_{(24)1}, t_{(13)2}, t_{(14)2} \\
-1 & t_{(22)1}, t_{(33)2}, t_{(44)2}, t_{(24)3} \\
-2 & t_{(23)2}, t_{(24)2}  \end{array} \eeq
\caption{The b.w. values of the algebraically independent components of the purely tensor part of torsion in an arbitrary frame.}
\end{figure}

We will first exploit the existence of the vector and axial parts of the torsion, $\bV$ and $\bA$. If ${\bf V}$ and ${\bf A}$ are non-trivial and if the torsion tensor is of alignment type {\bf II} then noting that $\epsilon_{abcd}$ is of alignment type {\bf II}, then the vector and axial part of the torsion are necessarily of alignment type {\bf II} or {\bf III}. That is, these vectors can be spatial or null, but not timelike. The contributions of the vector and axial parts to the torsion tensor are

\beq \hat{T}_{abc} = \frac23 g_{a[b}V_{c]} + \epsilon_{abcd}A^d \label{VAtensor} \eeq

Unlike the vector parts of the torsion tensor, the tensor part of the torsion is not easily classified; however, it is possible to apply the alignment classification. By subtracting the vector and axial parts using equation \eqref{VAtensor} from the torsion $T_{abc}$, it follows that the purely tensorial part of the torsion must be at most alignment type {\bf II}. 

\begin{lem} \label{lem:AlgSpcl}
A torsion tensor is of alignment type {\bf II} or more special if and only if relative to a common null frame:
\begin{itemize} \item The purely tensor-part of the torsion tensor, $t_{(ab)c}$, is of alignment type {\bf II}, {\bf III}, {\bf N} or {\bf O}.
\item The vector and axial parts of torsion are of alignment type {\bf II}, {\bf III} or {\bf O}.
\end{itemize}
\end{lem}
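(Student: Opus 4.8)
The plan is to prove Lemma \ref{lem:AlgSpcl} by exploiting the irreducible decomposition \eqref{TorsionDecomp} of the torsion tensor together with the fact that the boost-weight grading is compatible with this decomposition. The key structural observation is that \eqref{TorsionDecomp} is a Lorentz-covariant (indeed $GL(4)$-covariant once one has a metric) direct-sum decomposition into the tensor part $t_{(ab)c}$, the vector part $V_a$ and the axial part $A^a$, and that each of these pieces is built from $T_{abc}$ by a \emph{linear} operation (a contraction with $g$, or with $\epsilon_{abcd}$, or the symmetrization/trace-removal in \eqref{Ttor}). Since a linear operation that commutes with the boost action cannot raise boost weight, the $b$-weight-$b$ projection of each irreducible piece is obtained from $(T_{abc})_b$ by that same linear operation; conversely \eqref{VAtensor} together with \eqref{Ttor} reconstructs $(T_{abc})_b$ from the weight-$b$ pieces of $\mathbf{t}$, $\mathbf{V}$, $\mathbf{A}$. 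Hence the positive-b.w.\ part of $T_{abc}$ vanishes in a given null frame if and only if the positive-b.w.\ parts of $t_{(ab)c}$, $V_a$ and $A^a$ all vanish in that same frame.

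Concretely I would proceed as follows. First I would note that "alignment type \textbf{II} or more special relative to a common null frame" means exactly: there is a choice of $\bell$ (equivalently a null frame $\{\bn,\bell,\bar\bm,\bm\}$) with respect to which all components of boost weight $1$ and $2$ vanish, i.e.\ $(\hat{\bf T})_b = 0$ for $b>0$. Then I would write $T_{abc} = \tfrac23(t_{abc}-t_{acb}) - \tfrac13(g_{ab}V_c - g_{ac}V_b) + \epsilon_{abcd}A^d$ and apply the boost-weight projection to both sides. Because $g_{ab}$ and $\epsilon_{abcd}$ are each boost-weight-homogeneous of weight $0$ (for the complex null metric, $g_{12}$ has $b=0$, $g_{34}$ has $b=0$, and similarly every nonzero component of $\epsilon_{abcd}$ has $b=0$, since $\epsilon_{1234}$ pairs one $\bell$-index with one $\bn$-index), multiplication by $g$ or $\epsilon$ preserves the grading, so $(T_{abc})_b$ decomposes as the sum of a term linear in $(t_{(ab)c})_b$, a term linear in $(V)_b$, and a term linear in $(A)_b$. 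For the forward direction I would observe that $V_a$ and $A^a$ are recovered from $T_{abc}$ by pure contractions \eqref{Vtor}, \eqref{Ator} (weight-preserving), so $(T)_b=0$ for $b>0$ forces $(V)_b = (A)_b = 0$ for $b>0$; subtracting $\hat T_{abc}$ of \eqref{VAtensor}, which is then supported in $b\le 0$, leaves $\tfrac23(t_{abc}-t_{acb})$ supported in $b\le 0$, and since $t_{(ab)c}$ is determined from $T_{abc}-\hat T_{abc}$ by the weight-preserving symmetrization \eqref{Ttor}, $(t_{(ab)c})_b = 0$ for $b>0$. For the converse I would just add the three pieces back via \eqref{TorsionDecomp}: if each irreducible part has vanishing positive-b.w.\ components in the common frame, so does their sum $T_{abc}$. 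The possible alignment types of the individual pieces follow from their boost-weight ranges: $V_a$ and $A^a$ are one-forms whose components run over $b\in\{-1,0,1\}$ in general, so being type \textbf{II} or more special leaves only $\{-1,0\}$, i.e.\ type \textbf{II}, \textbf{III} or \textbf{O} (type \textbf{N} being impossible for a vector already restricted this way is automatic, or alternatively one invokes the earlier remark that $\bV,\bA$ cannot be timelike); for $t_{(ab)c}$, figure \ref{fig:bwd1torsion} shows its components run over $b\in\{-2,\dots,2\}$, so type \textbf{II} or more special gives \textbf{II}, \textbf{III}, \textbf{N} or \textbf{O}.

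The main obstacle I anticipate is bookkeeping rather than conceptual: one must be careful that "alignment type \textbf{II} relative to a \emph{common} null frame" is genuinely equivalent to the simultaneous vanishing of positive-b.w.\ components of all three irreducible parts \emph{in the same frame}, and not merely that each part is individually type \textbf{II} in possibly different frames. This is exactly why the decomposition argument is the right tool — it is manifestly frame-by-frame — whereas an argument phrased in terms of alignment types alone would risk conflating "each part has \emph{some} aligned $\bell$" with "there is a single $\bell$ aligning all of them." A secondary point requiring care is verifying that the symmetrization-plus-trace-removal map \eqref{Ttor} from $T_{abc} - \hat T_{abc}$ to $t_{(ab)c}$, and its inverse (rebuilding $\tfrac23(t_{abc}-t_{acb})$), are both boost-weight-preserving and that the cyclic and trace identities \eqref{tensortorsionids} do not secretly mix weights — but since all of these are $g$-contractions and index permutations, weight preservation is immediate. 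I would present the proof in the order: (i) restate the type-\textbf{II} condition as vanishing of $(\cdot)_b$ for $b>0$; (ii) note weight-homogeneity of $g_{ab}$, $\epsilon_{abcd}$; (iii) forward direction via \eqref{Vtor}, \eqref{Ator}, \eqref{Ttor}; (iv) converse via \eqref{TorsionDecomp}; (v) read off the allowed individual types from the b.w.\ ranges.
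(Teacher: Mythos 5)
Your proposal is correct and takes essentially the same route as the paper's own argument: the paper likewise relies on the boost-weight compatibility of the irreducible decomposition \eqref{TorsionDecomp}, noting that $\bV$ and $\bA$ are obtained from $T_{abc}$ by weight-preserving contractions (with $\epsilon_{abcd}$ of b.w.\ zero) so they inherit type \textbf{II}/\textbf{III}/\textbf{O}, and that subtracting their contribution \eqref{VAtensor} forces $t_{(ab)c}$ to be at most type \textbf{II}, with the converse read off by summing the pieces in the common frame. Your explicit tracking of the positive boost-weight projections and of the allowed types from the b.w.\ ranges simply makes precise what the paper states more briefly.
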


\section{An example: vanishing scalar invariants } \label{sec:Examples}

As a motivating example, we will consider a class of teleparallel geometries belonging to one of the simplest classes of $\mathcal{I}_T$-degenerate telelparallel geometries where all TSPIs vanish, the VSI$_T$ teleparallel geometries. In order to determine some examples of VSI$_T$ teleparallel geometries, we will consider the teleparallel analogue of the VSI$_{\tilde{R}}$ spacetimes in GR where all scalar polynomial  curvature invariants ($\tilde{R}$SPIs) vanish. We note that in this case we are considering all $SPIs$ which are constructed from the curvature tensor, $\tilde{R}_{abcd}$ and its covariant derivatives with respect to the Levi-Civita connection. 

The teleparallel analogues of the VSI$_{\tilde{R}}$ spacetimes can be described using the aligned Kundt coframe \cite{Higher}:
\beq \bh^a = \left[ \begin{array}{c} \bn \\ \bell \\ {\bf \bar{m}} \\ {\bf m} \end{array} \right] = \left[ \begin{array}{c}   dv + H du + W_1 dx + W_2 dy \\ du \\ \frac{1}{\sqrt{2}}(dx - i dy) \\ \frac{1}{\sqrt{2}}(dx + i dy) \end{array} \right], \label{alignedKundtCoframe} \eeq
\noindent where $H(u,v,x,y)$, $W_1(u,v,x,y)$ and $W_2(u,v,x,y)$ are of the form ( and $P=1$)
\beq \begin{aligned} & H = \frac{\epsilon^2}{2x^2} v^2 + H^1(u,x,y) v + H^0(u,x,y) \\
& W_1 = -\frac{2 \epsilon }{x} v + W_1^0(u,x,y),~W_2 =  W_2^0(u,x,y). \end{aligned} \eeq
\noindent Here, $\epsilon = 0$ or $1$, and the remaining metric functions $H^1, H^0, W_1^0$ and $W_2^0$ are arbitrary. 

We note that this class of teleparallel geometries contains the plane parallel gravitational wave (PPGW) teleparallel geometries as a special case \cite{Coley:2019zld}. The PPGW teleparallel geometries are a frame-based analogue of the vacuum PP-wave spacetimes which admit an isotropy group of null rotations about $\bell$, known as the plane gravitational waves metrics in GR \cite{kramer}.

We have not yet specified a spin connection which will contribute to the form of the torsion tensors. To initiate the analysis and consequent observations we will consider an inertial spin connection  in which the the above frame becomes a proper frame.

The torsion is of the form:
\beq T^a = \left[ T^1, 0, 0, 0 \right]^T \eeq
\noindent where
\beq \begin{aligned} T^1 =& \left( H^ 1 + \frac{\epsilon^2 v}{x^2} \right)  ( \bh^1 \wedge \bh^2) - \left( \frac{\sqrt{2} \epsilon}{x} \right)  ( \bh^1 \wedge \bh^3) - \left( \frac{\sqrt{2} \epsilon}{x} \right)  ( \bh^1 \wedge \bh^4) \\  & - \left( \frac{\sqrt{2} \mathcal{V} }{x} \right)  ( \bh^2 \wedge \bh^3) - \left( \frac{\sqrt{2} \bar{\mathcal{V}} }{x} \right)  ( \bh^2 \wedge \bh^4) \\ & - i \left( \frac{ 2W_2^0 \epsilon -W^0_{2,x} x + W^0_{1,y} x }{x} \right)  ( \bh^3 \wedge \bh^4), \end{aligned} \eeq

\noindent with $\mathcal{V} $ a polynomial in $v$,
\beq \begin{aligned} \sqrt{2} x^3 \mathcal{V} =&  -[H^1_{~,x} x^3 - W^0_1 \epsilon^2 x + i (W^0_2 \epsilon^2 x - H^1_{~,y} x^3)]v \\
& + W^0_1 H^1 x^3 +W^0_{~1,u}x^3 +H^0_{,x} x^3 - 2 H^0 \epsilon x^2 \\ & - i(W_2^0 H^1 x^3 + W^0_{~2,u} x^3 - H^0_{~,y} x^3 ). \end{aligned} \eeq

\noindent We note when $\epsilon \neq 0$, not all TSPIs will vanish since for example the trace-part ${\bf V}$ has the form

\beq {\bf V} = \left(H^1 + \frac{\epsilon^2 v}{x^2} \right) \bell - \frac{\epsilon \sqrt{2}}{x} \bm - \frac{\epsilon \sqrt{2}}{x} \bar{\bm}, \label{VSIv} \eeq
\noindent and so the magnitude of ${\bf V}$, $|{\bf V}|^2 = 2 \epsilon |x|^{-1}$, is a non-vanishing and non-constant TSPI. Therefore this teleparallel geometry is VSI$_T$ only when $\epsilon = 0$.

This is a departure from what is found in GR.  Using co-frame \eqref{alignedKundtCoframe} with its corresponding Levi-Civita spin connection, one finds that all Levi-Civita scalar curvature invariants are zero and therefore this geometry is a VSI$_{\tilde R}$.  But as we see above, VSI$_{\tilde R}$ spacetimes typically have non-trivial TSPIs in general.  The teleparallel analogue of the VSI$_{\tilde R}$ geometries can have non-zero members in ${\cal I}_T$, for the zero spin connection, and hence are not VSI$_T$. It is possible that a judicious choice of the spin connection could correct the $\epsilon \not= 0$ case above and ensure that all TSPIs vanish as well. We will return to this question after determining the class of all ${\cal I}_T$-degenerate teleparallel geometries.

\section{$\mathcal{I}_T$-degenerate teleparallel geometries} \label{sec:Ideg}

We will consider the class of $\mathcal{I}_T$-degenerate teleparallel geometries in a non-proper frame. From  corollary \ref{cor:Ideg}, a teleparallel geometry will be $\mathcal{I}_T$-degenerate if the torsion tensor and its covariant derivatives are of alignment type {\bf II} to all orders. This will restrict the form of the frame and chosen spin connection.

To continue we will utilize the NP formalism for the spin connection. Using the complex null frame $\{ {\bf L}, {\bf N}, {\bf M}, \bar{{\bf M}} \}$ the complex spin connection components are labelled as:
\beq \begin{aligned}  & -\kappa = \omega_{311}, -\rho = \omega_{314},~ -\sigma = \omega_{313},~-\tau = \omega_{312}, \\  & \nu = \omega_{422},~ \mu = \omega_{423},~\lambda = \omega_{424},~\pi = \omega_{421}, \\
& -\epsilon = \frac12 (\omega_{211} -  \omega_{431}),~ -\beta = \frac12 (\omega_{213}-\omega_{433}), \\
& \gamma = \frac12(\omega_{122}- \omega_{342}),~ \alpha = \frac12(\omega_{124}-\omega_{344}).
\end{aligned} \label{NPspin} \eeq

\noindent The spin connection will not transform covariantly under a Lorentz frame transformation; however, the boost weights of the components can be determined:
\beq \text{ b.w. 2 } &:& \kappa \nonumber \\
\text{ b.w. 1 } &:& \rho, \sigma, \epsilon, \nonumber \\
\text{ b.w. 0 } &:& \tau, \pi, \alpha, \beta \label{NPspinBW} \\
\text{ b.w. -1 } &:& \mu, \lambda, \gamma \nonumber \\
\text{ b.w. -2 } &:& \nu \nonumber \eeq

\begin{thm} \label{thm:NEF}

If a teleparallel geometry admits a coframe,
$\{ \bn, \bell, \bar{\bm}, \bm\}$, where the torsion tensor is of alignment type {\bf II} and
\beq \kappa=\sigma=\rho = 0, \label{NonExpandingFrame} \eeq
then coordinates can be chosen so that the coframe is of the form:
\beq \begin{aligned} \bell & = du \\
\bn &= dv + H(u,v, x^1, x^2) du + W_i(u,v,x^1,x^2) dx^i \\
\bm &= M_0(u,v,x^1, x^2) du + M_1(u,x^1, x^2) (dx^1 + i dx^2) \end{aligned} \label{GAKundtframe} \eeq
\noindent where $H$ and $W_i$ are real-valued functions and $M_0$ and $M_1$ are complex-valued functions.

\end{thm}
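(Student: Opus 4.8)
The plan is to read off what the torsion conditions $\kappa=\sigma=\rho=0$ say about the frame vector $\bell={\bf L}$ and then integrate the resulting Pfaffian system. Recall from \eqref{ell} and the discussion around the aligned Kundt coframe that $\rho,\sigma,\kappa$ are precisely the components of $\nabla\bell$ (equivalently $D\bell$) that measure the expansion, shear and one of the twist/geodesy pieces of the congruence generated by ${\bf L}$; with a metric-compatible flat connection the combination $\kappa=\sigma=\rho=0$ forces $\bell_{a|b}$ to have the restricted form $\bell_{a|b}=L\,\bell_a\bell_b + L_i(\bell_a m^i_b + m^i_a\bell_b)$, i.e.\ ${\bf L}$ is a non-expanding, non-shearing, non-twisting geodesic null vector field. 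In particular $\bell\wedge d\bell=0$, so $\bell$ is (locally) hypersurface-orthogonal and proportional to an exact form; after rescaling we may take $\bell=du$ for a scalar $u$, and the dual relation $\bh_1={\bf L}=\partial/\partial v$ for a complementary coordinate $v$ with $\bell({\bf L})=0$, $\bn({\bf L})=1$. This is exactly the first line of \eqref{GAKundtframe} and the normalization already used to pass from \eqref{ell} to \eqref{AKundtframe}.

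Next I would fix the remaining coordinates $x^1,x^2$ on the $2$-surfaces $u=\text{const}$, $v=\text{const}$ by using the spatial rotation freedom and the $m$-$\bar m$ structure. Writing the general null coframe as $\bn=dv+H\,du+W_i\,dx^i$, $\bm=M_0\,du+M_i\,dx^i$ (no $dv$ term because $\bm({\bf L})=0$ forces the $dv$-coefficient of $\bm$ to vanish, and likewise the $dv$-coefficient of $\bn$ is normalized to $1$), the content of $\kappa=\sigma=\rho=0$ beyond geodesy-and-affine-parametrization is that the transverse metric $2M_{(i}\bar M_{j)}$ is, along ${\bf L}$, covariantly constant in the appropriate sense; concretely it says $\partial_v M_i=0$ up to a rotation, i.e.\ the $\bm$ one-form restricted to the screen space is $v$-independent and conformally flat in two dimensions. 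Since any $2$-dimensional Riemannian metric is conformally flat, I can choose $x^1,x^2$ to be isothermal coordinates on each $u=\text{const}$ slice so that $M_1\,(dx^1+i\,dx^2)$ captures the full transverse part, with $M_0\,du$ the only remaining freedom; this is how the "$M_0\,du+M_1(dx^1+i dx^2)$" form in \eqref{GAKundtframe} arises, with $M_1=M_1(u,x^1,x^2)$ because $v$-independence was already established and $M_0=M_0(u,v,x^1,x^2)$ left general. Reality of $H$ and $W_i$ is automatic since $\bn$ is a real one-form.

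The key steps, in order: (1) translate $\kappa=\sigma=\rho=0$ into $\bell_{a|b}=L\bell_a\bell_b+L_i(\bell_a m^i_b+m^i_a\bell_b)$ using the NP dictionary \eqref{NPspin} and metric compatibility; (2) deduce $\bell\wedge d\bell=0$, hence $\bell=du$ after rescaling and $\bh_1=\partial_v$; (3) use $\bm({\bf L})=0$ and $\bn({\bf L})=1$ to eliminate the $dv$ components of $\bm$ and normalize that of $\bn$; (4) show the residual shear/expansion vanishing makes the screen coframe $v$-independent, then invoke $2$-dimensional conformal flatness to pick isothermal coordinates on the $u=\text{const}$ slices, yielding the stated form of $\bm$. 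The main obstacle I anticipate is step (4): extracting the precise statement that the transverse coframe is $v$-independent and two-dimensionally conformal requires carefully disentangling which parts of $\sigma$ and $\rho$ (and the rotation coefficients $\alpha,\beta$) are genuinely constrained versus which are pure-gauge, since the spin connection does not transform tensorially; one must verify that the coordinate choices made in steps (2)--(3) do not conflict with the rotation freedom still needed in step (4), i.e.\ that the Frobenius integrability of the full system $\bell=du$, $\bh_1=\partial_v$, $\bm = M_0 du + M_1(dx^1+idx^2)$ is actually guaranteed by $\kappa=\sigma=\rho=0$ and not an extra assumption. Verifying this compatibility — essentially a commuting-flows / Frobenius argument on the screen bundle — is the technical heart of the proof; everything else is bookkeeping with the NP identities already set up in \eqref{NPspin}--\eqref{NPspinBW}.
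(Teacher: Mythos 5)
Your overall strategy is the same as the paper's (use the hypotheses to get $d\bell\propto(\cdot)\wedge\bell$, apply Frobenius to set $\bell=du$ and ${\bf L}=\partial_v$, complete the coframe in these coordinates, kill the $v$-dependence of the transverse part, and finish with a two-dimensional coordinate transformation), but there is a genuine gap in how you justify the key steps: you never actually use the alignment type {\bf II} hypothesis, and without it the argument fails. You treat $\kappa,\sigma,\rho$ as the optical scalars of the congruence of ${\bf L}$ and conclude ``geodesic, shear-free, expansion-free, twist-free, hence $\bell\wedge d\bell=0$.'' In a teleparallel geometry these are components of the \emph{independent} flat spin connection, not of the Levi-Civita connection, so they do not carry the usual kinematic meaning (the metric kinematics of the congruence differ by contortion terms), and the exterior derivative of a coframe element is governed by the first structure equation $d\bh^a=-\bomega^a_{~b}\wedge\bh^b+\tfrac12 T^a_{~bc}\,\bh^b\wedge\bh^c$. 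Even with $\kappa=\sigma=\rho=0$, the components of $d\bell$ along $\bn\wedge\bar{\bm}$, $\bn\wedge\bm$ and $\bar{\bm}\wedge\bm$ still contain the torsion components $T_{113},T_{114},T_{134}$ (boost weights $+2,+2,+1$), so $\bell\wedge d\bell=0$ is \emph{not} a consequence of \eqref{NonExpandingFrame} alone; it requires the vanishing of the positive boost weight torsion components, i.e.\ precisely the type {\bf II} assumption, which is how the paper obtains $d\bell={\bf W}\wedge\bell$. (Note also that with a metric-compatible connection the form \eqref{ell} is symmetric in its two indices, so all of $d\bell$ comes from torsion plus the remaining connection pieces; the inference ``twist-free $\Rightarrow$ hypersurface orthogonal'' is a torsion-free fact. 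Moreover $\kappa=\sigma=\rho=0$ by itself leaves an $(\epsilon+\bar\epsilon)\,\ell_a n_b$ piece in $\bell_{a|b}$; the paper removes it with a preliminary boost and spin setting $\epsilon=0$, which is also what allows $v$ to be an affine parameter.)

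The same issue recurs at your step (4), which you yourself flag as the unverified ``technical heart'': the statement that $\rho=\sigma=0$ forces the transverse coframe to be $v$-independent is again only true once the positive boost weight torsion components are set to zero, and it is exactly the part the paper proves by computation rather than by kinematic interpretation: one writes the general coframe adapted to $\bell=du$, ${\bf L}=\partial_v$, with $\bm=M_0\,du+M_i\,dx^i$, and evaluates $\sigma$ and $\rho$ (which, after the type {\bf II} conditions remove the torsion contamination, reduce to algebraic expressions in $M_{i,v}$); their vanishing gives $M_{i,v}=0$, and then the transformation $(u,v,x^i)\to(u,v,\zeta,\bar\zeta)$ — your isothermal-coordinates step — yields \eqref{GAKundtframe}. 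So the proposal is a correct outline of the paper's route, but as written it rests on Levi-Civita intuition where the torsion term is essential, and its decisive computation is left undone.
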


\begin{proof}

If such a coframe exists, then we can apply a boost and spin to set $\epsilon = 0$ and preserve the conditions of the theorem. Then, the first Cartan structure equation implies
\beq d \bell &= -{\bomega}^2_{~b} \wedge {\bf {h}}^b + \frac12 T^2_{~bc} {\bf {h}}^b \wedge {\bf {h}}^c \\
&= {\bomega}_{1b} \wedge {\bf {h}}^b - \frac12 T_{1bc} {\bf {h}}^b \wedge {\bf {h}}^c.  \eeq

Using the requirement on the spin coefficients in equation  \eqref{NonExpandingFrame} and that the torsion tensor must be of alignment type {\bf II}, so that all positive boost-weight components vanish, the exterior derivative of $\bell$ can be expressed as the wedge product:
\beq d \bell =  {\bf W} \wedge \bell \eeq
\noindent where ${\bf W}$ is some one-form with components as sums of the components of the spin connection and torsion tensor, respectively. 

This satisfies all the conditions of the {\it Frobenius theorem}, and thus there is locally two non-constant, non-zero, real functions $F$ and $u'$ such that $\bell = F du'$. Another non-constant, non-zero real function $u$ can be defined so that $du = F du'$. In addition, if $v$ denotes an affine parameter, then a local coordinate system $\{y^a\} = \{ u,v,x^1,x^2\}$ can be introduced where ${\bf L} = \partial_v$. 

Then completing the complex null coframe adapted to $\bell$ gives:
\beq \begin{aligned} \bell & = du \\
\bn &= dv + H(u,v, x^1, x^2)  du + W_i(u,v,x^1,x^2) dx^i \\
\bm &= M_0(u,v,x^1, x^2) du + M_i(u,v, x^1, x^2) dx^i \end{aligned} \nonumber \eeq
\noindent where $H$ and $W_i$ are real-valued functions and $M_0$ and $M_i$ are complex-valued functions. 

By requiring that the positive b.w. terms of the torsion tensor vanish along with $\sigma$ and $\rho$ we find
\beq \begin{aligned} \sigma &= -i \star (\bell \wedge \bm \wedge d \bm) = \frac{ \bar{M}_{2,v} \bar{M}_1 - \bar{M}_{1,v} \bar{M}_2}{-M_1 \bar{M}_2 + M_2 \bar{M}_1}, \\
\rho &= \frac{i}{2} \star (\bell \wedge ( \bar{\bm} \wedge d \bm - \bm \wedge d \bar{\bm} - \bn \wedge d \bell) ) \\ &= - \frac{M_{1,v} \bar{M}_2 + \bar{M}_{2,v} M_1 - \bar{M}_{1,v} M_2 - M_{2,v} \bar{M_1}}{M_1 \bar{M}_2 - M_2 \bar{M}_1}. \end{aligned} \eeq
From these two equations, it follows that $M_{i,v} = 0$ and we can employ a general coordinate transformation of the form:
\beq (u,v,x^i) \to (u,v, \zeta, \bar{\zeta}),~~ \zeta = \frac{1}{\sqrt{2}}[\tilde{x}^1(u,x^i) + i \tilde{x}^2(u,x^i)], \nonumber \eeq
\noindent where  $\tilde{x}^i$ are real-valued functions. This yields the desired form of the spatial part of the coframe in equation \eqref{GAKundtframe}, without changing the the form of the NP quantities.

\end{proof}

The null coframe \eqref{GAKundtframe} can be related to the coframe \eqref{AKundtframe} using a null rotation about $\bell$ and a spin. However, these Lorentz transformations will not set $\tau$ to zero and so the aligned Kundt coframes can only be aligned to a subset of the Kundt coframes (those with vanishing $\tau$). As we will work with an $\bell$-proper frame, we will avoid using the aligned Kundt frame as an anzatz. We note that the corresponding metric arising from the frame, $g_{\mu \nu} = -2 \ell_{(\mu} n_{\nu)} + m_{(\mu} \bar{m}_{\nu)}$ is the metric for the most general non-twisting expansion-free geometry, relative to the Levi-Civita connection. The corresponding frame derivative operators are:

\beq \begin{aligned} D = L^\mu \partial_\mu &=  \partial_v \\
\delta  = M^{\mu} \partial_\mu &= -\left[ \frac{W_1 + i W_2 }{2 \bar{M}_1} \right] \partial_v + \frac{1}{2 \bar{M}_1} ( \partial_{x^1} + i \partial_{x^2}) \\ 
\Delta = N^{\mu} \partial_\mu &=  \partial_u + \left[ \frac{ Im(M_0 \bar{M}_1) W_2 + Re(M_0 \bar{M}_1) W_1}{|M_1|^2} - H \right ] \partial_v \\ 
&  - \left[\frac{Re(M_0 \bar{M}_1)}{|M_1|^2} \right] \partial_{x^1} - \left[\frac{ Im(M_0 \bar{M}_1)}{|M_1|^2} \right] \partial_{x^2}. \end{aligned} \label{KundtDualFrame} \eeq

We have constructed a zeroth order Cartan frame, by aligning $\bell$ with the irreducible parts of the torsion tensor. While there may be some additional Lorentz gauge freedom (i.e., spins, boosts, null rotations about $\bell$) that can be fixed using the covariant derivatives of the torsion tensor,  it will not be necessary to exhaust this freedom. We will call such a coframe \eqref{GAKundtframe} together with a non-trivial spin connection such that $\kappa=\rho = \sigma = \epsilon = 0$ a {\bf Kundt coframe}. We note that relative to a Kundt coframe, the covariant derivatives of the torsion tensor are permitted to be of alignment type {\bf I}. 

By imposing the $\mathcal{I}_T$-degenerate condition and taking covariant derivatives of the irreducible parts of the torsion tensor we can determine strong conditions on the frame and the arbitrary spin connection. We will call any Kundt coframe satisfying the $\mathcal{I}_T$-degenerate condition a {\bf degenerate Kundt coframe}. 

\begin{thm} \label{thm:DKundt}
For any $\mathcal{I}_T$-degenerate teleparallel geometry, coordinates can be chosen so that the Kundt coframe takes the form:
\beq \begin{aligned} \bell & = du \\
\bn &= dv + H(u,v, x^1, x^2)  du + W_i(u,v,x^1,x^2) dx^i \\
\bm &= M_0(u,v,x^1, x^2) du + M_1(u,x^1, x^2) (dx^1 + i dx^2) \end{aligned} \label{DegKundtFrame} \eeq
\noindent where $M_0$ and $M_1$ are complex-valued functions, $H$ and $W_i$ are real-valued functions and $M_0, H$ and $W_i$ are polynomial in the $v$-coordinate:
\beq M_0 = M_0^{(1)}v + M_0^{(0)},~ H =  H^{(2)} v^2 + H^{(1)} v + H^{(0)} \text{ and } W_i = W_i^{(1)} v+ W_i^{(0)}. \label{DegKundtFns} \eeq

\noindent Relative to this coframe, the following spin connection components must vanish
\beq \kappa=\rho=\sigma=\epsilon = 0 \eeq

\noindent and an $\bell$-preserving Lorentz transformation can be used to set all other spin connection components to zero except $\tau = \tau(u,x,y)$.

\end{thm}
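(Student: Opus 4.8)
The plan is to build on Theorem \ref{thm:NEF}, which already gives the coframe \eqref{GAKundtframe} from the conditions $\kappa=\rho=\sigma=\epsilon=0$ and the torsion being of alignment type \textbf{II}. The first step is to justify that an $\mathcal{I}_T$-degenerate teleparallel geometry can always be placed in a frame satisfying these hypotheses: by Corollary \ref{cor:Ideg} all torsion tensors are of alignment type \textbf{II} or more special relative to a common null coframe, and in particular the torsion $T_{abc}$ and its first covariant derivative are. Starting from an $\bell$-proper frame as set up in Section \ref{sec:Alignment} (so only $\bomega^2{}_3$ and $\bomega^2{}_4$, i.e. the NP coefficients $\kappa,\rho,\sigma,\tau$, are nonzero), the requirement that the boost-order of $T$ along $\bell$ be $\le 0$ forces the positive-b.w. components of $T$ to vanish; computing these components in terms of the spin connection (using the decomposition \eqref{TorsionDecomp}--\eqref{Ttor} and Figure \ref{fig:bwd1torsion}) shows that the b.w.\ $+2$ and $+1$ pieces are controlled precisely by $\kappa,\rho,\sigma$. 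Hence $\kappa=\rho=\sigma=0$, and a residual boost/spin sets $\epsilon=0$, so Theorem \ref{thm:NEF} applies and we may assume the coframe \eqref{GAKundtframe} with $\tau$ (equivalently $\bomega^2{}_4$, $\bomega^2{}_3$ suitably) as the only surviving nonzero spin-connection component at this stage.

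The second and main step is to extract the polynomial dependence \eqref{DegKundtFns} in $v$. Here I would compute the torsion two-form and then its irreducible parts $\bV,\bA,\bt$ explicitly in the coframe \eqref{GAKundtframe} using the frame operators \eqref{KundtDualFrame} and the first Cartan structure equation $d\bh^a = -\bomega^a{}_b\wedge\bh^b + \tfrac12 T^a{}_{bc}\bh^b\wedge\bh^c$. The key structural fact is that $\partial_v = D = {\bf L}$ is the aligned direction and that a $v$-derivative \emph{lowers} boost weight by one (this is the teleparallel analogue of the degenerate-Kundt argument of \cite{CHPP2009}). Demanding that \emph{all} covariant derivatives $T_{abc|d_1\cdots d_q}$ be of alignment type \textbf{II}, i.e.\ that no positive-b.w.\ component ever appears, translates into a finite system of differential constraints: each successive $\partial_v$ applied to a zero-b.w.\ or negative-b.w.\ torsion component must not generate a positive-b.w.\ piece, which forces the relevant metric functions to be at most linear, quadratic, etc.\ in $v$. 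Concretely one shows $W_i$ is at most linear, $H$ at most quadratic, and $M_0$ at most linear in $v$ (matching the powers already seen in the VSI example of Section \ref{sec:Examples}), while $M_1$ is $v$-independent as already established in Theorem \ref{thm:NEF}. The bookkeeping is the analogue of showing a degenerate Kundt metric has polynomial profile functions; I would organise it by the b.w.\ table of Figure \ref{fig:bwd1torsion} together with \eqref{NPspinBW}.

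The third step is to pin down the spin connection: after fixing the coframe, the remaining $\bell$-preserving Lorentz freedom consists of boosts, spins and null rotations about $\bell$. Using Lemma \ref{lem:AlgSpcl}, the alignment-type-\textbf{II} condition on $\bV$, $\bA$ and $\bt$ to all orders, combined with the flatness constraint $R^a{}_{b\mu\nu}=0$ (which expresses $\bomega$ through a Lorentz matrix, Eq.~just before Section \ref{sec:CKalg}), forces the b.w.\ $-1$ and $-2$ components $\mu,\lambda,\gamma,\nu$ and the b.w.\ $0$ components $\pi,\alpha,\beta$ either to vanish or to be removable; the residual null rotation about $\bell$ together with spins/boosts is then used to gauge them away, leaving only $\tau$, which cannot be removed by any $\bell$-preserving transformation (as noted after Theorem \ref{thm:NEF}). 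Finally one checks $\tau=\tau(u,x^1,x^2)$ — i.e.\ $\partial_v\tau=0$ — which again follows from the b.w.-lowering argument applied to a derivative of the torsion involving $\tau$, or directly from $R^3{}_{1\mu\nu}=0$.

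The hard part will be the second step: systematically showing that the infinite tower of ``no positive b.w.\ at any order'' conditions collapses to the finite polynomial constraints \eqref{DegKundtFns}, rather than an endlessly growing list. This requires identifying that once the torsion and its first (or second) covariant derivative are type \textbf{II} in the Kundt coframe, the Kundt structure itself guarantees the property propagates to all orders — precisely the teleparallel analogue of the GR result that the degenerate Kundt conditions are self-propagating. Care is also needed because the spin connection does not transform tensorially, so ``b.w.\ of $\bomega$'' is only meaningful within the already-restricted frame class; one must verify that each gauge fixing used is compatible with, and does not destroy, the conditions $\kappa=\rho=\sigma=\epsilon=0$.
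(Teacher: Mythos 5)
Your skeleton (reduce to Theorem \ref{thm:NEF}, impose type {\bf II} on all covariant derivatives, extract polynomial $v$-dependence, then deal with $\tau$) matches the paper's overall strategy, but two of your key steps are not actually established and they are precisely where the paper does its real work. First, you claim that in an $\bell$-proper frame the vanishing of the positive boost weight components of the torsion tensor itself ``is controlled precisely by $\kappa,\rho,\sigma$'' and hence forces $\kappa=\rho=\sigma=0$. That is not justified: in a teleparallel geometry the torsion is built from both the anholonomy of the coframe and the spin connection, so type {\bf II} of $T_{abc}$ only gives relations in which the spin coefficients can be compensated by frame derivatives; this is why Theorem \ref{thm:NEF} takes $\kappa=\rho=\sigma=0$ as a separate hypothesis rather than a consequence of alignment. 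The paper instead derives $\kappa=\rho=\sigma=0$ (and, crucially, $D\tau=0$) from the requirement that the \emph{covariant derivatives} of the irreducible parts $\bV$, $\bA$ and ${\bf t}$ (Lemma \ref{lem:AlgSpcl}) remain of type {\bf II}, and it must do so case by case on the alignment type of each part (type {\bf II}, {\bf III}, and for ${\bf t}$ also type {\bf N}, where the Bianchi identities $T^a_{~[bc|d]}=0$ and the vector trace $t^{(ab)}_{~~~c|ab}$ are needed); the decisive conditions only appear at second, third and fourth order. Your proposal contains none of this case analysis.

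Second, the step you yourself flag as ``the hard part'' --- showing the infinite tower of no-positive-b.w.\ conditions collapses to the finite polynomial constraints \eqref{DegKundtFns} --- is left open, and your fallback for $D\tau=0$ (``directly from $R^3{}_{1\mu\nu}=0$'') does not work as stated: the flatness condition is automatic for any inertial connection and, once the gauge has been fixed so that only $\tau$ survives, it over- or under-determines $\tau$ depending on which components you have already removed, so it cannot substitute for the order-by-order derivative analysis. In the paper the closure is achieved exactly by that analysis: once $D\tau=0$ is in hand, the anholonomy coefficients of the frame \eqref{KundtDualFrame} are forced to obey the same boost-weight conditions as the torsion components, i.e.\ to be polynomial in $v$ of degree equal to the modulus of their boost weight, and solving these equations yields \eqref{DegKundtFns}. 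Without deriving $D\tau=0$ and without the termination argument, your proof establishes only the setup of Theorem \ref{thm:NEF}, not the conclusion of Theorem \ref{thm:DKundt}.
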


\begin{proof}
By imposing the assumption of lemma \ref{lem:AlgSpcl}, we will consider the irreducible parts of the torsion tensor separately and require that their covariant derivatives are of alignment type {\bf II}. To prove the above result we will exploit the fact that for a generic $\mathcal{I}_T$-degenerate teleparallel geometry, we may start in a proper coframe and make a null rotation about $\bn$ to fix $\bell$ which introduces $\kappa,~ \sigma,~\rho$ and $\tau$. This implies that given a Kundt coframe which is improper but $\bell$ is aligned with the torsion tensors, then it is always possible to employ boosts, spins and null rotations about $\bell$ to set all other spin components to zero except $\kappa, \sigma, \rho$ and $\tau$. That is, this is an $\bell$-proper coframe. 

Defining $D = L^\mu \partial_\mu,~\Delta = N^{\mu} \partial_\mu,~~\delta = M^{\mu} \partial_\mu$, we will find conditions on the coframe derivatives of the torsion tensor components. To distinguish between the frame derivative operator $D$ and the covariant derivative $D_a$ without an index, we will use ${\bf D}$. 

To start we will examine the vector part of the torsion tensor. If the vector part of the torsion is of alignment type {\bf O} then the axial part of the torsion can be studied in the same manner. In this analysis we will consider the vector and axial parts of the torsion as one-forms.

\begin{itemize}
\item type {\bf II}: We may choose a frame so that ${\bf V
} = V_3 \bh^3$, then in order for the positive boost weight  components  ${\bf D} {\bf V}$ to vanish, we find
\beq \kappa = \rho = \sigma = 0 \text{ and } D V_3 = (\epsilon - \bar{\epsilon})V_3 \eeq
\noindent As the torsion tensor is of alignment type {\bf II}, this implies we can choose coordinates and an improper coframe as in theorem \ref{thm:NEF}. Making a transformation to an $\bell$-proper coframe, the vector part of torsion (treated as a one-form) transforms as, $V = V_2 \bell + V_3 \bar{\bm} + V_4 \bm $ with $V_4 = \bar{V_3}$ and since $\epsilon = 0$ in this $\bell$-proper frame $DV_3 = DV_4 = 0$. At second order, the vanishing of the positive boost weight components of ${\bf D}^2{\bf V}$ imply that \beq D^2 V_2 = 0,~D \tau V_3 + D \bar{\tau} V_4 = 0. \nonumber \eeq
\noindent We note that $D^2 V_2 = 0$ implies that $D V_2 = f(u,x^i)$ and so by applying a boost and spin to set $V_3 = V_4$ and $D V_2 = 1$, and introduce $\alpha, \beta, \gamma$ and $\epsilon$ as non-zero spin connection components which are independent of $v$ as they arise in the transformation rules for the spin-coefficients as frame derivatives of $D V_2$. Then the vanishing positive boost weight components at third and fourth order yield differential equations for $\tau$ and $V_3$ that require $D \tau = 0$.

\item type {\bf III}: Due to the algebraic type, it follows that ${\bf V} = V_2 \bell$ and we can transform the coframe to a $\bell$-proper coframe without lost of generality. At first order, requiring that ${\bf D} {\bf V}$ is of type {\bf II} implies that $\kappa = 0$. At second order, the spin-coefficients $\rho$ and $\sigma$ must vanish and $D^2 V_2 = 0$. Thus, we may choose coordinates and a coframe as in theorem \ref{thm:NEF}. Continuing to the components of ${\bf D}^3 {\bf V}$, we find
\beq D \tau D V_2 = 0,~ D^2 \tau V_2 =0 \nonumber \eeq
\noindent It is possible that $D V_2 = 0$ and so $D^2 \tau =0$, however, looking at the positive boost weight components of ${\bf D}^4 {\bf V}$, we find $D \tau = 0$.

\end{itemize}

In the case that both the vector and axial parts of the torsion tensor are of alignment type {\bf O}, then we may look to the purely tensorial part, ${\bf t}$.

\begin{itemize}
\item type {\bf II}: At first order, the positive boost weight components of ${\bf D} {\bf t}$ vanish if and only if
\beq \kappa=\tau=\sigma = 0,~ D t_{(13)2} = D t_{(23)1} = 0. \nonumber \eeq
\noindent Thus theorem \ref{thm:NEF} implies we can use coordinates and the coframe \eqref{GAKundtframe}. Choosing the $\bell$-proper coframe and computing ${\bf D}^2 {\bf t}$ the vanishing of the positive boost weight terms yields
\beq D^2 t_{(12)2}= 0, D^2 t_{(33)2} = 0, D^2 t_{(24)3} =0,~ D \tau = 0. \nonumber \eeq
\noindent Finally, in order for $D^3 {\bf t}$ to be of alignment type {\bf II}, there is a differential condition on the boost weight $-2$ terms, $D^3 t_{(23)2} = 0$.

\item type {\bf III}: Imposing that the first covariant derivative, ${\bf D} {\bf t}$, is of type {\bf II}, we find $\kappa=0$. At second order, ${\bf D}^2 {\bf t}$, the vanishing positive boost weight components give
\beq \rho = \sigma = 0, D^2 t_{(12)2}= 0, D^2 t_{(33)2} = 0, D^2 t_{(24)3} =0. \nonumber \eeq
\noindent Choosing the $\bell$-proper coframe, ${\bf t}$ remains of alignment type {\bf III} and  the positive boost weight components of ${\bf D}^3 {\bf t}$ give:
\beq D^2 t_{(12)2}= 0, D^2 t_{(33)2} = 0, D^2 t_{(24)3} =0. \nonumber \eeq
\noindent Finally, computing the fourth covariant derivative, there is one more identity that must be satisfied which implies $D \tau = 0$.

\item type {\bf N}: As the tensor ${\bf t}$ is invariant under null rotations about $\bell$, we will choose a coframe where all spin-components are zero except $\kappa, \rho, \sigma$ and $\tau$. At first order ${\bf D} {\bf t}$ is at most of alignment type {\bf II} and we must consider the second covariant derivative, ${\bf D}^2 {\bf t}$. To simplify matters, we will consider the Bianchi identities \cite{Aldrovandi_Pereira2013}, $T^a_{~[bc|d]} = 0$, when ${\bf V} = {\bf A} = 0$, giving

\beq D t_{(23)2} = (\bar{\rho}+\rho) t_{(23)2}, \kappa = 0, t_{(24)2} \sigma = \rho t_{(23)2}, \tau t_{(24)2} - \bar{\tau} t_{(23)2} = \delta t_{(24)2} - \bar{\delta} t_{(23)2}. \nonumber \eeq

\noindent The vanishing of $\kappa$ is sufficient to ensure ${\bf D}^2 {\bf t}$ is of type {\bf II}. Looking at ${\bf D}^3 {\bf t}$ and setting the positive boost weight components to zero, we find that
\beq \rho=\sigma = 0,~~ D t_{(23)2} = 0. \eeq

\noindent To show that $D \tau = 0$, we will consider the vector-trace of ${\bf D}^2 {\bf t}$, $W_c = t^{(ab)}_{~~~c;ab}$:
\beq {\bf W} = [D \bar{\tau} t_{(23)2} + D \tau t_{(24)2}] \bell \eeq

\noindent If this vector field is non-zero, then it is of type {\bf III} and taking covariant derivatives of ${\bf W}$ will show that $D \tau =0$. If the vector field vanishes, then differentiating the Bianchi identities with respect to $D$ gives:
\beq D \bar{\tau} t_{(23)2} - D \tau t_{(24)2} = 0 \eeq

\noindent Thus $D \tau =0$ otherwise the tensorial part of the torsion is of type {\bf O} and the teleparallel geometry is Minkowski space.

\end{itemize}

To complete the proof, we consider the contribution of the anholonomy coefficients of the frame version of the coframe in equation \eqref{KundtDualFrame}. Noting that $\tau$ is non-zero, with $D \tau = 0$, the anholonomy components must satisfy the same conditions as the components of the torsion tensor. That is, relative to this coordinate system, the components must be polynomial in $v$ with degree equal to the absolute value of their boost weight. Solving the resulting differential equations gives \eqref{DegKundtFns}.

\end{proof}

The coframe in \eqref{DegKundtFrame} is considerably more complicated than the initial Kundt coframe in \eqref{AKundtframe}. This is due to the fact that the coframe in \eqref{AKundtframe} is not $\bell$-proper and will have a more general spin connection.

Without loss of generality, the frame in equation \eqref{DegKundtFrame} can be considered as a $\bell$-proper coframe. We note that this frame is not necessarily a Cartan frame as it is not invariantly defined by normalizing the components of the torsion tensors.

\begin{lem}
For any $\mathcal{I}_T$-degenerate teleparallel geometry, the Kundt coframe in equations \eqref{DegKundtFrame} and \eqref{DegKundtFns} is related to a proper coframe by a null rotation about $\bn$ with the frame functions, $W_i$ and $M_0$ and null rotation parameter satisfy the following constraints:
\beq \begin{aligned} E &= E^{(1)} (u,x^1, x^2) v +E^{(0)} (u,x^1, x^2). \\  0 &= \Delta E,\\  0 &= \bar{\delta} E, \\
 0 &=  \delta E. \end{aligned}  \eeq
\end{lem}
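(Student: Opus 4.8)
The plan is to reconstruct the proper coframe by undoing the null rotation about $\bn$ that produced the degenerate Kundt coframe. First I would recall from the proof of Theorem~\ref{thm:DKundt} that the $\ell$-proper coframe \eqref{DegKundtFrame} arises from a genuine proper coframe by a single null rotation about $\bn$, parametrized by a complex function $E$; such a rotation sends $\bell \to \bell$, $\bm \to \bm + E\bell$, $\bn \to \bn + \bar{E}\bm + E\bar{\bm} + |E|^2 \bell$, and, crucially, it is exactly the transformation that introduces the spin coefficients $\kappa,\sigma,\rho,\tau$ that are present in the $\ell$-proper frame but absent in a proper frame. Since a proper frame has \emph{all} spin connection components equal to zero, the requirement that the rotated frame have the specific nonzero NP scalars dictated by Theorem~\ref{thm:DKundt} (only $\tau = \tau(u,x^1,x^2)$ surviving, with $\kappa=\rho=\sigma=\epsilon=0$) translates directly into differential equations on $E$.

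Next I would write out the transformation law for the spin coefficients under this null rotation about $\bn$, starting from the zero spin connection of the proper frame. The non-homogeneous inhomogeneity $\Lambda^a{}_d \partial_f \Lambda_b{}^d$ produces the new spin coefficients as frame derivatives of $E$ along the proper-frame operators. Setting $\kappa = 0$ and using $D\tau = 0$ (established in Theorem~\ref{thm:DKundt}) will force $E$ to be at most linear in $v$, giving the ansatz $E = E^{(1)}(u,x^1,x^2)v + E^{(0)}(u,x^1,x^2)$; this is the content of the first displayed equation in the lemma. Then imposing $\rho = \sigma = 0$ and the condition that the surviving $\tau$ depends only on $(u,x^1,x^2)$ yields that the relevant frame derivatives of $E$ transverse to $\bell$ must vanish, i.e. $DE$-type derivatives are pinned while $\Delta E = \bar\delta E = \delta E = 0$ up to the terms already absorbed into $W_i$ and $M_0$. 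I would carry out this bookkeeping using the explicit dual frame \eqref{KundtDualFrame} so that the constraints $\Delta E = 0$, $\bar\delta E = 0$, $\delta E = 0$ come out expressed in the coordinates $\{u,v,x^1,x^2\}$, with the frame functions $W_i$ and $M_0$ entering precisely because $\delta$ and $\Delta$ contain them.

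The main obstacle I anticipate is the algebraic consistency step: after writing the four constraints on $E$, one must check that they are compatible with the polynomial-in-$v$ structure \eqref{DegKundtFns} already forced on $H$, $W_i$, $M_0$, and that no further restriction on those frame functions (beyond what Theorem~\ref{thm:DKundt} gives) is secretly implied. Concretely, $\delta E = 0$ and $\bar\delta E = 0$ mix the $v$-derivative $\partial_v E = E^{(1)}$ against the spatial derivatives through the $W_1 + iW_2$ factor in $\delta$, so one has to verify that the linear-in-$v$ ansatz for $E$ closes under these operators given the linear-in-$v$ form of $W_i$. I would handle this by expanding each constraint in powers of $v$ and matching coefficients, using the $\ell$-proper conditions to eliminate cross terms; the expected outcome is that the system is consistent and the only residual content is the four stated equations on $E$ together with the already-known form of $W_i$ and $M_0$. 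Finally I would remark that the freedom in $E^{(0)}$ and $E^{(1)}$ is exactly the residual gauge freedom relating different proper coframes for the same geometry, which closes the argument.
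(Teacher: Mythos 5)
Your overall strategy is exactly the paper's: relate the $\bell$-proper Kundt coframe to a proper coframe by the null rotation about $\bn$, use the inhomogeneous transformation of the spin connection starting from the zero connection of the proper frame, impose that only the allowed coefficients survive, and read off the constraints on $E$. However, the explicit transformation you wrote down is wrong, and it is the step on which everything else hinges: the map $\bell \to \bell$, $\bm \to \bm + E\bell$, $\bn \to \bn + \bar{E}\bm + E\bar{\bm} + |E|^2\bell$ is a null rotation about $\bell$ (it fixes $\bell$), not about $\bn$. A null rotation about $\bn$ fixes $\bn$ and sends $\bell \to \bell + \bar{E}\bm + E\bar{\bm} + |E|^2\bn$, $\bm \to \bm + E\bn$; this is what the paper uses, and it is the only transformation that can re-align $\bell$ with the torsion tensors, which is the whole reason the $\bell$-proper frame differs from the proper one. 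If you carry out your bookkeeping with the formulas as written, a zero spin connection generates the negative boost-weight coefficients ($\pi$, $\lambda$, $\mu$, $\nu$ and the $\alpha$, $\gamma$ combinations), not $\kappa$, $\sigma$, $\rho$, $\tau$, so you would never arrive at the stated constraints. With the correct rotation the computation gives
\begin{equation}
\kappa' = |E|^2 \tau + \Delta E + E\,\delta\bar{E} + \bar{E}\,\bar{\delta}E + |E|^2 DE, \quad
\sigma' = E\tau + E\,DE + \bar{\delta}E, \quad
\rho' = \bar{E}\tau + \bar{E}\,DE + \delta E, \quad
\tau' = \tau + DE,
\end{equation}
and demanding that these vanish yields $\tau = -DE$, $\bar{\delta}E = \delta E = \Delta E = 0$, which is the lemma.

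Two further points of bookkeeping. First, the linearity of $E$ in $v$ does not come from ``$\kappa = 0$ and $D\tau = 0$'' as you suggest; it comes from $\tau' = \tau + DE = 0$ combined with $D\tau = 0$ from Theorem \ref{thm:DKundt}, so that $DE = -\tau$ is $v$-independent and $E = E^{(1)}v + E^{(0)}$. Second, your anticipated conclusion that ``no further restriction on the frame functions is secretly implied'' is not quite the right reading: the four equations are joint constraints linking $E$ to $W_i$, $M_0$ and $H$ through the operators \eqref{KundtDualFrame}, and solving them order-by-order in $v$ is precisely what produces the additional restrictions on the frame functions recorded in Theorem \ref{thm:ImproperDKundt}. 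For this lemma you only need to exhibit the constraints, so the expansion-in-$v$ consistency check you propose is fine, but it should be presented as deriving those relations rather than as verifying their absence.
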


\begin{proof}

To characterize the entire set of $\mathcal{I}_T$-degenerate teleparallel geometries, we will momentarily consider the transformation from the ${\mbold \ell}$-proper coframe to a proper coframe $\{ {\bf n'}, {\mbold\ell'},  \bar{{\bf m'}}, {\bf m'} \}$, in order to align the torsion tensor and its covariant derivatives a null rotation about $\bn'$ is necessary:
\beq {\bf n'} = \bn,~~ {\mbold\ell'} = \bell + \bar{E} \bm + E \bar{\bm} + |E|^2 \bn,~~ {\bf m'} = \bm + E \bn. \eeq

Doing so, the only non-zero spin connections are:
\beq \begin{aligned} \kappa' &= |E|^2 \tau + \Delta E + E \delta \bar{E} + \bar{E} \bar{\delta} E + |E|^2 D E,\\ \sigma' &= E \tau + E D E + \bar{\delta} E, \\ 
\rho' &= \bar{E} \tau + \bar{E} D E + \delta E,\\
 \tau' &= \tau + D E. \end{aligned} \eeq

Requiring that these quantities vanish we find the simpler equations 
\beq \begin{aligned} \tau &=- D E, \\
 0 &= \Delta E ,\\
 0 &= \bar{\delta} E, \\
 0 &= \delta E,~. \end{aligned} \label{DegKundtReduced}\eeq

\noindent Here, $E$ is linear in $v$ in order to satisfy theorem \ref{thm:DKundt}. Equivalently, for each choice of $E$, the equations in \eqref{DegKundtReduced} are linear equations for the frame functions $H, W_i$ and $M_0$ once expanded in terms of equations \eqref{KundtDualFrame}.

\end{proof}

\begin{thm} \label{thm:ImproperDKundt}
For any $\mathcal{I}_T$-degenerate teleparallel geometry with an $\bell$-proper Kundt coframe of the form \eqref{DegKundtFrame} and \eqref{DegKundtFns} with $\tau \neq 0$, the null rotation parameter $E(u,x^i)$ is real-valued and the frame functions take the following form: \beq \begin{aligned} W_i &= ( \ln E^{(1)})_{,x^i}v + \frac{E^{(0)}_{,x^i}}{E^{(1)}}  \\ 
H^{(2)} &=0 \\ 
H^{(1)} &= ( \ln E^{(1)})_{,u} \\
H^{(0)} &= \frac{E^{(0)}_{,u}}{E^{(1)}}.  \label{DegKundtfnsReduced} \end{aligned} \eeq

\end{thm}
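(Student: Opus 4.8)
The plan is to substitute the explicit frame derivative operators $D,\ \delta,\ \Delta$ of \eqref{KundtDualFrame} into the reduced constraint system \eqref{DegKundtReduced} supplied by the preceding lemma, and to read off the frame functions coefficient by coefficient in the $v$--expansion. By that lemma $E = E^{(1)}(u,x^1,x^2)\,v + E^{(0)}(u,x^1,x^2)$ and $D=\partial_v$, so the first relation $\tau = -DE$ gives at once $\tau = -E^{(1)}$; the hypothesis $\tau\neq 0$ is therefore equivalent to $E^{(1)}\neq 0$ everywhere, and this is exactly what will justify every division below and, in the end, force $H^{(2)}=0$.

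First I would use $\delta E = \bar\delta E = 0$. Writing $\delta$ and $\bar\delta$ out from \eqref{KundtDualFrame} and using $\partial_v E = E^{(1)}$, these two equations read $E_{,x^1}\pm i\,E_{,x^2} = (W_1\pm iW_2)E^{(1)}$; adding and subtracting gives $E_{,x^i} = W_i E^{(1)}$. Separating the parts linear and constant in $v$ yields $W_i^{(1)} = E^{(1)}_{,x^i}/E^{(1)}$ and $W_i^{(0)} = E^{(0)}_{,x^i}/E^{(1)}$. Next, feeding $\partial_v E = E^{(1)}$ and $\partial_{x^i}E = W_i E^{(1)}$ into $\Delta E = 0$, the terms carrying the real coefficients $\mathrm{Re}(M_0\bar M_1)/|M_1|^2$ and $\mathrm{Im}(M_0\bar M_1)/|M_1|^2$ cancel exactly against the $W_i$--contributions sitting in the $\partial_v$--coefficient of $\Delta$, leaving the single relation $E_{,u} = H E^{(1)}$ (equivalently $dE = E^{(1)}\bn$). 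Matching powers of $v$ and using $E^{(1)}\neq 0$ gives $H^{(2)}=0$, $H^{(1)} = E^{(1)}_{,u}/E^{(1)}$ and $H^{(0)} = E^{(0)}_{,u}/E^{(1)}$. This already reproduces \eqref{DegKundtfnsReduced} up to rewriting the ratios $E^{(1)}_{,\cdot}/E^{(1)}$ as logarithmic derivatives $(\ln E^{(1)})_{,\cdot}$ — a rewriting legitimate only once $E$ is known to be real-valued.

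It remains to establish that $E$ can be taken real. By Theorem \ref{thm:DKundt} all of $W_i$ and $H^{(k)}$ are real, so $\mathrm{Im}\bigl(E^{(1)}_{,x^i}/E^{(1)}\bigr) = \mathrm{Im}\bigl(E^{(1)}_{,u}/E^{(1)}\bigr) = 0$; since $E^{(1)}$ is independent of $v$, this says that $\arg E^{(1)}$ is a constant $\phi_0$, whence $E^{(1)} = |E^{(1)}|\,e^{i\phi_0}$. The same reality statements for $W_i^{(0)}$ and $H^{(0)}$ show that every partial derivative of $E^{(0)}$ is a real multiple of $e^{i\phi_0}$, so $E$ equals $e^{i\phi_0}\cdot(\text{real-valued function})$ plus a complex constant. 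I would then invoke the residual gauge freedom left after the construction in the preceding lemma: a constant spin of the $\bell$-proper coframe corresponds to a constant spin of the associated proper coframe $\{{\bf n'},{\mbold\ell'},\bar{\bf m'},{\bf m'}\}$ and multiplies $E$ by the constant phase, while a null rotation about $\bn$ with constant parameter preserves the proper condition and shifts $E$ by that constant; hence $E$ is only defined up to $E\mapsto e^{i\psi_0}E + c_0$ with $\psi_0\in\mathbb{R}$ and $c_0\in\mathbb{C}$ constant. Choosing $\psi_0=-\phi_0$ makes $E^{(1)}$ real (and, shrinking the neighbourhood if needed, positive), and then choosing $c_0$ to cancel the residual constant in $E^{(0)}$ makes $E$ real-valued. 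With $E$ real and $E^{(1)}>0$ one has $E^{(1)}_{,x^i}/E^{(1)} = (\ln E^{(1)})_{,x^i}$ and $E^{(1)}_{,u}/E^{(1)} = (\ln E^{(1)})_{,u}$, and the formulas of the previous paragraph become exactly \eqref{DegKundtfnsReduced}.

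I expect the reality step to be the only genuine obstacle: the manipulation of the frame operators is linear bookkeeping, but one must verify carefully that these two globally constant transformations — a constant spin and a constant null rotation about $\bn$ — indeed preserve simultaneously the proper coframe $\{{\bf n'},\ldots,{\bf m'}\}$ and the degenerate Kundt $\bell$-proper coframe \eqref{DegKundtFrame}--\eqref{DegKundtFns}, and that they act on the null-rotation parameter as $E\mapsto e^{i\psi_0}E+c_0$. It is also worth isolating the role of the hypothesis explicitly: $\tau\neq 0\Leftrightarrow E^{(1)}\neq 0$ is precisely what forces the quadratic-in-$v$ term $H^{(2)}$, present in the general degenerate Kundt form \eqref{DegKundtFns}, to vanish in this branch.
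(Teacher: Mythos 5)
Your proposal is correct and takes essentially the same route as the paper: expand the constraints \eqref{DegKundtReduced} using the frame operators \eqref{KundtDualFrame} and the $v$-polynomial forms \eqref{DegKundtFns}, then solve algebraically by matching coefficients of powers of $v$ (using $\tau\neq0\Leftrightarrow E^{(1)}\neq0$ to divide and to force $H^{(2)}=0$). Your explicit justification of the reality of $E$ via the residual constant spin and constant null rotation about $\bn$ is a sound elaboration of a point the paper's one-line proof leaves implicit.
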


\begin{proof}
Expanding the conditions \eqref{DegKundtReduced} using the frame basis \eqref{KundtDualFrame} and the conditions on the frame functions in \eqref{DegKundtFns}, the conditions on the frame functions can be solved algebraically from the coefficients of the powers of the $v$-coordinate.
\end{proof}

We note that the class of $\mathcal{I}_T$ teleparallel geometries where the coframe in theorem \ref{thm:DKundt} is proper constitutes a significantly larger class of coframes the aligned Kundt coframes.

\subsection{VSI$_T$ teleparallel geometries}
With this approach, we can determine the class of VSI$_T$ teleparallel geometries \cite{Hervik2011}:

\begin{cor}
The teleparallel geometries where the torsion tensor and all of its covariant derivatives are of type {\bf III} or more special, constitute the class of VSI$_T$ teleparallel geometries.
\end{cor}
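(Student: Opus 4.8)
The plan is to prove the corollary in both directions, using Theorem \ref{thm:HHY1} (equivalently Corollary \ref{cor:Ideg}), Lemma \ref{lem:AlgSpcl}, and the structure established in Theorems \ref{thm:DKundt} and \ref{thm:ImproperDKundt}. First I would establish the ``easy'' implication: if the torsion tensor and all of its covariant derivatives are of alignment type {\bf III} or more special, then every TSPI vanishes. Since a TSPI is a full contraction of a product of torsion tensors, its value equals the sum of products of boost-weight-$b_k$ components whose total boost weight must sum to zero (because the scalar is boost invariant). If every constituent tensor has only components of boost weight $b \le -1$ (type {\bf III} means the maximal boost weight is $-1$), then in any such monomial every factor contributes strictly negative boost weight, so the total is strictly negative and cannot be zero; hence the contraction vanishes identically. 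I would phrase this carefully using the boost-weight decomposition $\hat{T} = \sum_b (\hat{T})_b$ introduced earlier, and the fact (implicit in the alignment setup) that $\epsilon_{abcd}$ and $\eta_{ab}$ carry boost weight zero so they do not rescue the count. This direction is essentially the teleparallel analogue of the standard VSI argument for curvature invariants.

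Next I would prove the converse: if a teleparallel geometry is VSI$_T$, then the torsion tensors are all of type {\bf III} or more special. Here I would argue contrapositively. A VSI$_T$ geometry is in particular $\mathcal{I}_T$-degenerate, so by Corollary \ref{cor:Ideg} and Theorem \ref{thm:DKundt} it admits a degenerate Kundt coframe of the form \eqref{DegKundtFrame}--\eqref{DegKundtFns} in which all torsion tensors are simultaneously of alignment type {\bf II}, with the spin connection reduced (after an $\bell$-preserving transformation) to $\kappa=\rho=\sigma=\epsilon=0$ and a single surviving component $\tau=\tau(u,x^1,x^2)$. Suppose for contradiction that some torsion tensor is of type {\bf II} but \emph{not} type {\bf III}, i.e.\ it has a nonzero boost-weight-zero component. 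Using Lemma \ref{lem:AlgSpcl} this can only happen in the vector/axial part (which would then be genuinely spacelike, boost weight $0$) or in the boost-weight-$0$ row of the tensor part $t_{(ab)c}$ listed in Figure 1. In either case one can build a nonzero TSPI: contract the boost-weight-$0$ piece with itself (or with the corresponding piece of a covariant derivative) to produce a scalar of total boost weight zero whose value is a nonzero polynomial in the Cartan scalars. Concretely, for a spacelike ${\bf V}$ the scalar $V_aV^a = |{\bf V}|^2$ is a nonvanishing TSPI (this is exactly the mechanism exhibited in Section \ref{sec:Examples} with $|{\bf V}|^2 = 2\epsilon|x|^{-1}$), and for a nonzero boost-weight-$0$ part of ${\bf t}$ one contracts $t_{(ab)c}t^{(ab)c}$ and isolates the $b=0\times b=0$ contribution, which cannot be cancelled by other terms since all other products have nonpositive-and-not-all-zero weight. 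This contradicts VSI$_T$.

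The main obstacle is making the converse argument watertight: showing that the candidate nonzero TSPI really is nonzero, i.e.\ that the boost-weight-$0$ contribution is not accidentally cancelled by cross terms of the form (positive b.w.)$\times$(negative b.w.) — but in the type {\bf II} frame there are no positive boost weight components at all, so every monomial in a full contraction is a product of components of boost weight $\le 0$, and the only monomials summing to total weight $0$ are those using exclusively boost-weight-$0$ components; hence the $b=0$ sector contributes additively and independently. Thus a nonzero $b=0$ component of any torsion tensor forces a nonzero quadratic TSPI. The remaining care is bookkeeping: one must check that at least one of the quadratic contractions one writes down is not forced to vanish by the algebraic identities \eqref{tensortorsionids} or by the Bianchi identities $T^a_{~[bc|d]}=0$; this is a finite check over the list in Figure 1 and the vector/axial parts, and in each case a suitable self-contraction of the offending component survives.

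Putting the two implications together: the torsion tensors are of type {\bf III} or more special in a common null frame if and only if all TSPIs vanish, which is precisely the statement that the geometry is VSI$_T$. I would close by remarking that this is consistent with, and sharpens, the general $\mathcal{I}_T$-degenerate classification of Theorem \ref{thm:DKundt}: VSI$_T$ is the sub-case in which the boost-weight-$0$ components of all torsion tensors also vanish, which via \eqref{TorsionDecomp} and Figure 1 translates into additional linear constraints on the metric functions $H^{(j)}, W_i^{(j)}, M_0^{(j)}$ and on $\tau$ in the degenerate Kundt coframe \eqref{DegKundtFrame}.
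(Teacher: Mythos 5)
Your forward implication (all torsion tensors of aligned type \textbf{III} $\Rightarrow$ every TSPI vanishes) is correct and is the standard boost-weight counting argument: in a full contraction the metric pairs a $\bell$-slot with an $\bn$-slot, so every non-vanishing monomial has total boost weight zero, which is impossible if every factor has boost weight $\le -1$. Note, however, that the paper does not prove this corollary at all: it is imported from \cite{Hervik2011}, i.e.\ from the general theorem that a tensor (or collection of tensors) has all vanishing polynomial invariants if and only if it is of aligned type \textbf{III}, the VSI counterpart of Theorem \ref{thm:HHY1}.

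The genuine gap is in your converse. After reducing (correctly, via Corollary \ref{cor:Ideg}) to a common type \textbf{II} frame, you claim that a nonzero boost-weight-$0$ component of some torsion tensor forces a nonzero quadratic self-contraction, because ``the $b=0$ sector contributes additively and independently.'' The independence of the $b=0$ sector is right, but the survival of a self-contraction is not: restricted to the boost-weight-$0$ sector the available contractions are \emph{indefinite} forms with cross terms, since paired slots carry opposite null labels (e.g.\ $T_{abc}T^{abc}$ contains products such as $T_{123}T_{214}$ with signs coming from $\eta^{12}=-1$), so a single quadratic invariant can vanish on a nonzero $b=0$ part. To close the argument you would have to show that no nonzero boost-weight-$0$ configuration of the torsion tensors annihilates \emph{every} invariant of every degree (including invariants mixing the torsion with its covariant derivatives); your ``finite check over Figure 1'' is asserted rather than performed, and it is not routine bookkeeping --- it is exactly the nontrivial content of the cited result of Hervik, whose proof proceeds through real invariant theory (orbit closures and a Hilbert--Mumford/boost-limit argument), not through exhibiting explicit surviving contractions. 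Either carry out that case analysis in full for the vector, axial and tensor parts, or do what the paper does and invoke \cite{Hervik2011} directly; as written, the converse is not established. (A minor further point: the detour through Theorem \ref{thm:DKundt} is unnecessary for this pointwise algebraic statement and risks circularity, since the corollary concerns the alignment type of the torsion tensors, not the coframe normal form.)
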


\noindent Using the $\bell$-proper coframe and the above corollary we can prove the following result for all VSI$_T$ teleparallel geometries.

\begin{thm}
The class of VSI$_T$ teleparallel geometries are given by the coframe:

\beq \begin{aligned} \bell & = du \\
\bn &= dv + H(u,v, x^1, x^2)  du + W_i(u,x^1,x^2) dx^i \\
\bm &= M_0(u,x^1, x^2) du + M_1(u) (dx^1 + i dx^2) \end{aligned} \label{VSIFrame} \eeq
\noindent where $M_0$ and $M_i$ are complex-valued functions, $H$ and $W_i$ are real-valued functions and  $H$ is linear in the $v$-coordinate:
\beq H = H^{(1)}(u,x^1,x^2) v + H^{(0)}(u,x^1,x^2). \label{VSIFns} \eeq

\noindent This coframe is proper, as the spin connection must be zero in this frame.

\end{thm}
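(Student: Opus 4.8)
The plan is to start from the characterization of VSI$_T$ geometries given in the preceding corollary: the torsion tensor and all of its covariant derivatives must be of alignment type \textbf{III} or more special. Since every VSI$_T$ geometry is in particular $\mathcal{I}_T$-degenerate, Theorem \ref{thm:DKundt} applies, so we may work in the $\bell$-proper Kundt coframe \eqref{DegKundtFrame}--\eqref{DegKundtFns}, with only $\tau = \tau(u,x^1,x^2)$ surviving (besides $\kappa=\rho=\sigma=\epsilon=0$). The first step is to re-run the boost-weight bookkeeping of the proof of Theorem \ref{thm:DKundt}, but now demanding type \textbf{III} rather than type \textbf{II}: every boost-weight $+2$ \emph{and} boost-weight $0$ component of the torsion and each of its covariant derivatives must vanish. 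Applied to the irreducible pieces (vector/axial part as one-forms, and the tensor part ${\bf t}$ with the b.w.\ table of figure \ref{fig:bwd1torsion}), this kills the highest-degree $v$-terms and should force the $v$-polynomial degrees to drop by one relative to the type \textbf{II} case: $H$ becomes linear, $W_i$ and $M_0$ become $v$-independent, and $M_1$ loses its $x^i$-dependence.

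Concretely, the key steps are: (i) impose that the b.w.\ $0$ part of ${\bf D}{\bf V}$ (or ${\bf D}{\bf t}$, or ${\bf D}{\bf A}$, depending on which irreducible part is nonzero) vanishes; combined with the already-established $D\tau=0$ from Theorem \ref{thm:DKundt}, this yields $H^{(2)}=0$, i.e.\ $H$ is linear in $v$ as in \eqref{VSIFns}. (ii) Examine the anholonomy coefficients of the dual frame \eqref{KundtDualFrame}: as in the closing paragraph of the proof of Theorem \ref{thm:DKundt}, these must now be polynomial in $v$ of degree equal to the absolute value of their boost weight \emph{minus one} (since type \textbf{III} suppresses one more b.w.\ level), forcing $W_{i,v}=0$ and $M_{0,v}=0$ and $M_{1,x^i}=0$, i.e.\ $M_1=M_1(u)$. (iii) Finally, show the coframe is proper: once $\tau$ is the only candidate nonzero spin coefficient and the frame functions have the reduced form \eqref{VSIFrame}, use the transformation to a proper coframe via the null rotation of the preceding lemma; with $W_i$ and $M_0$ now $v$-independent and $M_1$ depending only on $u$, the null-rotation parameter $E$ can be chosen to solve \eqref{DegKundtReduced} with $\tau=-DE=0$ (since $DE = E_{,v}=0$ once $E$ is forced $v$-independent by $H^{(2)}=0$ via Theorem \ref{thm:ImproperDKundt}), so in fact $\tau$ itself vanishes and the spin connection is identically zero.

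I expect the main obstacle to be step (i)--(ii): showing that the extra b.w.\ $0$ vanishing conditions, which must be imposed to \emph{all} orders of covariant differentiation, really do collapse to the clean algebraic constraints \eqref{VSIFns} on the frame functions and nothing more. This requires knowing that no new obstructions appear at higher covariant-derivative order — essentially an induction on the number of covariant derivatives, using that each $D$ raises polynomial degree in $v$ by at most one and each b.w.\ level of ${\bf t}$, ${\bf V}$, ${\bf A}$ is already pinned down, so the type \textbf{III} condition propagates. The argument parallels the corresponding statement for VSI$_{\tilde R}$ Lorentzian spacetimes \cite{Hervik2011}, and I would lean on that analogy to argue the higher-order conditions are automatically satisfied once the first- and second-order ones are. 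The reality of $E$ in step (iii) is inherited from Theorem \ref{thm:ImproperDKundt}, and the consistency of $\tau=0$ with the $\bell$-proper frame being a genuine specialization (rather than an empty class) follows because \eqref{VSIFrame} still contains the PPGW teleparallel geometries of section \ref{sec:Examples} with $\epsilon=0$ as a subcase.
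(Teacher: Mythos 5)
There is a genuine gap, and it sits exactly where the theorem's last sentence lives: the claim that the coframe is proper. You defer $\tau=0$ to your step (iii) and try to obtain it from Theorem \ref{thm:ImproperDKundt}, arguing that $H^{(2)}=0$ forces the null-rotation parameter $E$ to be $v$-independent and hence $\tau=-DE=0$. This inference is backwards: Theorem \ref{thm:ImproperDKundt} is stated for $\tau\neq 0$ and itself \emph{concludes} $H^{(2)}=0$, with $E$ linear in $v$ and $\tau=-DE=-E^{(1)}\neq 0$. So $H^{(2)}=0$ is perfectly compatible with $\tau\neq 0$ and cannot force $E_{,v}=0$; as written, the properness of the coframe (and with it the vanishing of the spin connection, which is part of the statement) is not established. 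The paper's route is more direct: starting from the $\bell$-proper coframe of Theorem \ref{thm:DKundt} (where $\tau$ is the \emph{only} surviving spin coefficient), one computes the torsion tensor itself and demands alignment type {\bf III}, i.e.\ vanishing of all its boost-weight-zero components. This immediately yields the four conditions $\tau=0$, $DW_1+iDW_2=0$, $DM_0=0$, $\delta M_1+i\bar\delta M_1=0$; since $\tau$ was the only candidate nonzero spin coefficient, properness follows at once, and $H^{(2)}=0$ then comes from requiring ${\bf D}{\bf T}$ to be of type {\bf III} at first order, giving \eqref{VSIFrame}--\eqref{VSIFns}.

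A secondary weakness is your step (ii): the "anholonomy coefficients are polynomial of degree $|$b.w.$|-1$" bookkeeping is a heuristic transplanted from the closing paragraph of the proof of Theorem \ref{thm:DKundt}, not a derivation. The $v$-independence of $W_i$ and $M_0$ and the reduction $M_1=M_1(u)$ are not separate degree-counting facts; they are precisely the conditions $DW_1+iDW_2=0$, $DM_0=0$ and $\delta M_1+i\bar\delta M_1=0$ obtained from the boost-weight-zero part of the torsion itself (recall $W_i$ and $M_0$ are already at most linear in $v$ by \eqref{DegKundtFns}, so $D$-annihilation kills their $v$-dependence outright). Once these and $H^{(2)}=0$ are in hand, the higher-order obstruction you flag as the "main obstacle" does not arise: the paper needs only the torsion and its first covariant derivative, so no induction on derivative order or appeal to the VSI$_{\tilde R}$ analogy is required.
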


\begin{proof}

Taking an $\bell$-proper frame of the form \eqref{DegKundtFrame} with frame functions in \eqref{DegKundtFns}, then computing the torsion tensor and requiring that it is of alignment type {\bf III}, we find the following equations:

\beq \begin{aligned}
& \tau = 0  \\
& D W_1 + i D W_2 = 0 \\
& D M_0 = 0 \\
& \delta M_{1} + i \bar{\delta} M_{1} = 0. \end{aligned} \eeq

\noindent Solving the differential equation for $M_1=M^R_1+iM^{I}_1$, it follows that $M_1$ must be a complex valued function of the $u$-coordinate alone. At first order, ${\bf D} {\bf T}$ is of type {\bf III} if and only if $H^{(2)} = 0$.
\end{proof}

To relate this result to the VSI$_T$ example given in section \ref{sec:Examples}, we can apply a Lorentz transformation which leaves $\bell$ invariant to the coframe in equation \eqref{VSIFrame} with the conditions \eqref{VSIFns} and produce the aligned Kundt coframe \eqref{alignedKundtCoframe} with $\epsilon =0$ along with a spin connection with non-zero components. Thus, any VSI$_T$ teleparallel geometry can be constructed using an aligned Kundt coframe arising from the class of VSI$_{\tilde{R}}$ metrics with $\epsilon =0$, along with a spin-connection arising from any Lorentz transformation that leaves $\bell$ unchanged. This implies that the class of VSI$_T$ teleparallel geometries arises as a subclass of the Kundt frames arising from the VSI$_{\tilde{R}}$ class of metrics.

The explicit relationship between VSI$_T$ and VSI$_{\tilde{R}}$ geometries (that is, how the curvature tensor and its covariant derivatives are expressed in terms of the torsion tensor and its covariant derivatives) is not fully understood. However, we have shown that the choice of possible proper frames arising from the class of VSI$_{\tilde{R}}$ metrics which produce VSI$_T$ teleparallel geometries are restricted to the subclass of $\epsilon = 0$ metrics.

\section{Discussion} \label{sec:discussion}

In this article we have shown that the class of four-dimensional $\mathcal{I}_T$-degenerate teleparallel geometries are described either by a proper Kundt coframe in \eqref{DegKundtFrame} and \eqref{DegKundtFns} or as an improper frame  in \eqref{DegKundtFrame} and \eqref{DegKundtfnsReduced} where the spin connection arises from a null rotation about the null frame element $\bn$. As a simple application of these results we have also provided an explicit form for the VSI$_T$ teleparallel geometries as a choice of proper frame in \eqref{VSIFrame} and \eqref{VSIFns}. Using this proper frame, we have then shown that the class of VSI$_T$ teleparallel geometries can be constructed using the coframe \eqref{alignedKundtCoframe} with $\epsilon =0$ along with a spin-connection arising from a Lorentz transformation that preserves $\bell$.

As in the case of 4D Riemannian geometries with Lorentzian signature, the entire class of $\mathcal{I}_T$-degenerate teleparallel geometries are contained within the Kundt class of metrics. However, there appears to be significant differences, such as the existence of teleparallel geometries which yield the same metric but are different teleparallel geometries. For example, take a proper Kundt coframe of the form \eqref{DegKundtFrame} and \eqref{DegKundtfnsReduced} and an improper frame of the form \eqref{DegKundtFns} and \eqref{DegKundtfnsReduced} and a spin connection arising from a null rotation about $\bn$ with real-valued parameter $E(u,x^i)$. The resulting teleparallel geometries will, in general, have differing canonical forms for their respective torsion tensors. Due to this, the Cartan-Karlhede algorithm will produce entirely distinct sets of Cartan invariants for each teleparallel geometry and this is sufficient to prove that they are not related by a coordinate transformation. 

It is possible to distinguish teleparallel gravity theories using observations from cosmology \cite{Nunes:2016plz, Nunes:2016qyp,Capozziello:2015rda} or by studying the propagation of gravitational waves \cite{Hohmann:2018jso}. In addition, these observations can partially discern the difference between solutions with the same metric but differing teleparallel geometries by measuring the components of the torsion tensor. In principle, torsion invariants, like TSPIs, and the Cartan invariants represent quantities that can be measured locally and hence can fully distinguish between two given solutions. 

We have shown that not all of the metrics in the VSI$_{\tilde{R}}$ class can produce a coframe which yields a VSI$_T$ teleparallel geometry, regardless of the choice of spin-connection.  This indicates there is a significant difference between $\mathcal{I}_T$ and the set $\mathcal{I}_{\tilde{R}}$, as a given teleparallel geometry can have a non-empty set $\mathcal{I}_T$ but an empty set $\mathcal{I}_{\tilde{R}}$ when considered as a Lorentzian geometry. This difference arises from the fact that the curvature tensor can be expressed in terms of the torsion tensor and its first covariant derivative with respect to the spin-connection. In future work we will examine the class of teleparallel geometries where all TSPIs are constant, the so-called {\it CSI teleparallel geometries}, and determine if $\mathcal{I}_T$ and $\mathcal{I}_{\tilde{R}}$ encode the same information. 

We believe that this result can be extended to higher dimensions to effectively prove that all $n$-dimensional $\mathcal{I}_T$-degenerate teleparallel geometries are described by a Kundt coframe and a choice of spin connection arising from a null rotation about $\bn$ in order to account for the possibility that $\bell$ must be aligned with the torsion tensors. Due to the form of the irreducible parts of the tensor, the proof in higher dimensions would be straightforward when the vector part of the torsion tensor is non-zero. The primary obstacle lies in the decomposition of the trace-free part of the torsion tensor and the application of the alignment classification to this tensor.  If this  is true, it may provide a stronger argument for the conjecture that the class of $\mathcal{I}_{\tilde{R}}$-degenerate spacetimes lie entirely in the Kundt class of metrics, known as the Kundt conjecture \cite{CHPP2009}. In 4D, there are examples of TSPIs which are topological, such as the torsional equivalent of the Gauss-Bonnet term \cite{Kofinas:2014owa}. In higher dimensions, such topological invariants may no longer be topological and provide additional information on the teleparallel geometries such as their role in syzygies of other TSPIs.

\section*{Acknowledgments}
AAC was supported by the Natural Sciences and Engineering Research Council of Canada. 

%\section*{Data Availability}
%Data sharing is not applicable to this article as no new data were created or analyzed in this study.

\bibliographystyle{apsrev4-2}
\bibliography{Tele-Parallel-Reference-file1}

%apsrev4-2.bst 2019-01-14 (MD) hand-edited version of apsrev4-1.bst
%Control: key (0)
%Control: author (72) initials jnrlst
%Control: editor formatted (1) identically to author
%Control: production of article title (-1) disabled
%Control: page (0) single
%Control: year (1) truncated
%Control: production of eprint (0) enabled
\begin{thebibliography}{30}%
\makeatletter
\providecommand \@ifxundefined [1]{%
 \@ifx{#1\undefined}
}%
\providecommand \@ifnum [1]{%
 \ifnum #1\expandafter \@firstoftwo
 \else \expandafter \@secondoftwo
 \fi
}%
\providecommand \@ifx [1]{%
 \ifx #1\expandafter \@firstoftwo
 \else \expandafter \@secondoftwo
 \fi
}%
\providecommand \natexlab [1]{#1}%
\providecommand \enquote  [1]{``#1''}%
\providecommand \bibnamefont  [1]{#1}%
\providecommand \bibfnamefont [1]{#1}%
\providecommand \citenamefont [1]{#1}%
\providecommand \href@noop [0]{\@secondoftwo}%
\providecommand \href [0]{\begingroup \@sanitize@url \@href}%
\providecommand \@href[1]{\@@startlink{#1}\@@href}%
\providecommand \@@href[1]{\endgroup#1\@@endlink}%
\providecommand \@sanitize@url [0]{\catcode `\\12\catcode `\$12\catcode
  `\&12\catcode `\#12\catcode `\^12\catcode `\_12\catcode `\%12\relax}%
\providecommand \@@startlink[1]{}%
\providecommand \@@endlink[0]{}%
\providecommand \url  [0]{\begingroup\@sanitize@url \@url }%
\providecommand \@url [1]{\endgroup\@href {#1}{\urlprefix }}%
\providecommand \urlprefix  [0]{URL }%
\providecommand \Eprint [0]{\href }%
\providecommand \doibase [0]{https://doi.org/}%
\providecommand \selectlanguage [0]{\@gobble}%
\providecommand \bibinfo  [0]{\@secondoftwo}%
\providecommand \bibfield  [0]{\@secondoftwo}%
\providecommand \translation [1]{[#1]}%
\providecommand \BibitemOpen [0]{}%
\providecommand \bibitemStop [0]{}%
\providecommand \bibitemNoStop [0]{.\EOS\space}%
\providecommand \EOS [0]{\spacefactor3000\relax}%
\providecommand \BibitemShut  [1]{\csname bibitem#1\endcsname}%
\let\auto@bib@innerbib\@empty
%</preamble>
\bibitem [{\citenamefont {Nojiri}\ and\ \citenamefont
  {Odintsov}(2007)}]{Nojiri_Odintsov2006}%
  \BibitemOpen
  \bibfield  {author} {\bibinfo {author} {\bibfnamefont {S.~I.}\ \bibnamefont
  {Nojiri}}\ and\ \bibinfo {author} {\bibfnamefont {S.~D.}\ \bibnamefont
  {Odintsov}},\ }\href {https://doi.org/10.1142/S0219887807001928} {\bibfield
  {journal} {\bibinfo  {journal} {Int. J. Geom. Meth. Mod. Phys.}\ }\textbf
  {\bibinfo {volume} {04}},\ \bibinfo {pages} {115} (\bibinfo {year} {2007})},\
  \Eprint {https://arxiv.org/abs/0601213} {arXiv:0601213 [hep-th]} \BibitemShut
  {NoStop}%
\bibitem [{\citenamefont {Capozziello}\ and\ \citenamefont
  {De~Laurentis}(2011)}]{Capozziello_DeLaurentis_2011}%
  \BibitemOpen
  \bibfield  {author} {\bibinfo {author} {\bibfnamefont {S.}~\bibnamefont
  {Capozziello}}\ and\ \bibinfo {author} {\bibfnamefont {M.}~\bibnamefont
  {De~Laurentis}},\ }\href
  {https://doi.org/http://dx.doi.org/10.1016/j.physrep.2011.09.003} {\bibfield
  {journal} {\bibinfo  {journal} {Phys. Rep.}\ }\textbf {\bibinfo {volume}
  {509}},\ \bibinfo {pages} {167 } (\bibinfo {year} {2011})},\ \Eprint
  {https://arxiv.org/abs/gr-qc/1108.6266} {arxiv:gr-qc/1108.6266 [gr-qc]}
  \BibitemShut {NoStop}%
\bibitem [{\citenamefont {Stavrinos}\ and\ \citenamefont
  {Ikeda}(1999)}]{stavrinos1999some}%
  \BibitemOpen
  \bibfield  {author} {\bibinfo {author} {\bibfnamefont {P.~C.}\ \bibnamefont
  {Stavrinos}}\ and\ \bibinfo {author} {\bibfnamefont {S.}~\bibnamefont
  {Ikeda}},\ }\href@noop {} {\bibfield  {journal} {\bibinfo  {journal} {Rep.
  Math. Phys}\ }\textbf {\bibinfo {volume} {44}},\ \bibinfo {pages} {221}
  (\bibinfo {year} {1999})}\BibitemShut {NoStop}%
\bibitem [{\citenamefont {Ikeda}\ \emph {et~al.}(2019)\citenamefont {Ikeda},
  \citenamefont {Saridakis}, \citenamefont {Stavrinos},\ and\ \citenamefont
  {Triantafyllopoulos}}]{Ikeda:2019ckp}%
  \BibitemOpen
  \bibfield  {author} {\bibinfo {author} {\bibfnamefont {S.}~\bibnamefont
  {Ikeda}}, \bibinfo {author} {\bibfnamefont {E.~N.}\ \bibnamefont
  {Saridakis}}, \bibinfo {author} {\bibfnamefont {P.~C.}\ \bibnamefont
  {Stavrinos}},\ and\ \bibinfo {author} {\bibfnamefont {A.}~\bibnamefont
  {Triantafyllopoulos}},\ }\href {https://doi.org/10.1103/PhysRevD.100.124035}
  {\bibfield  {journal} {\bibinfo  {journal} {Phys. Rev. D}\ }\textbf {\bibinfo
  {volume} {100}},\ \bibinfo {pages} {124035} (\bibinfo {year} {2019})},\
  \Eprint {https://arxiv.org/abs/1907.10950} {arXiv:1907.10950 [gr-qc]}
  \BibitemShut {NoStop}%
\bibitem [{\citenamefont {Conroy}\ and\ \citenamefont
  {Koivisto}(2018)}]{Conroy:2017yln}%
  \BibitemOpen
  \bibfield  {author} {\bibinfo {author} {\bibfnamefont {A.}~\bibnamefont
  {Conroy}}\ and\ \bibinfo {author} {\bibfnamefont {T.}~\bibnamefont
  {Koivisto}},\ }\href {https://doi.org/10.1140/epjc/s10052-018-6410-z}
  {\bibfield  {journal} {\bibinfo  {journal} {Eur. Phys. J. C}\ }\textbf
  {\bibinfo {volume} {78}},\ \bibinfo {pages} {923} (\bibinfo {year} {2018})},\
  \Eprint {https://arxiv.org/abs/1710.05708} {arXiv:1710.05708 [gr-qc]}
  \BibitemShut {NoStop}%
\bibitem [{\citenamefont {Li}\ \emph {et~al.}(2011)\citenamefont {Li},
  \citenamefont {Sotiriou},\ and\ \citenamefont
  {Barrow}}]{Li_Sotiriou_Barrow2010}%
  \BibitemOpen
  \bibfield  {author} {\bibinfo {author} {\bibfnamefont {B.}~\bibnamefont
  {Li}}, \bibinfo {author} {\bibfnamefont {T.~P.}\ \bibnamefont {Sotiriou}},\
  and\ \bibinfo {author} {\bibfnamefont {J.~D.}\ \bibnamefont {Barrow}},\
  }\href {https://doi.org/10.1103/PhysRevD.83.064035} {\bibfield  {journal}
  {\bibinfo  {journal} {Phys. Rev. D}\ }\textbf {\bibinfo {volume} {83}},\
  \bibinfo {pages} {064035} (\bibinfo {year} {2011})},\ \Eprint
  {https://arxiv.org/abs/1010.1041} {arXiv:1010.1041 [gr-qc]} \BibitemShut
  {NoStop}%
%%CITATION = ARXIV:1010.1041;%%
\bibitem [{\citenamefont {Kr{\v{s}}{\v{s}}{\'a}k}(2017)}]{Krssak2015}%
  \BibitemOpen
  \bibfield  {author} {\bibinfo {author} {\bibfnamefont {M.}~\bibnamefont
  {Kr{\v{s}}{\v{s}}{\'a}k}},\ }\href@noop {} {\bibfield  {journal} {\bibinfo
  {journal} {Eur. Phys. J. C}\ }\textbf {\bibinfo {volume} {77}},\ \bibinfo
  {pages} {44} (\bibinfo {year} {2017})},\ \Eprint
  {https://arxiv.org/abs/1510.06676} {arXiv:1510.06676 [gr-qc]} \BibitemShut
  {NoStop}%
\bibitem [{\citenamefont {Bahamonde}\ \emph {et~al.}(2015)\citenamefont
  {Bahamonde}, \citenamefont {B{\"{o}}hmer},\ and\ \citenamefont
  {Wright}}]{Bahamonde_Boehmer_Wright2015}%
  \BibitemOpen
  \bibfield  {author} {\bibinfo {author} {\bibfnamefont {S.}~\bibnamefont
  {Bahamonde}}, \bibinfo {author} {\bibfnamefont {C.~G.}\ \bibnamefont
  {B{\"{o}}hmer}},\ and\ \bibinfo {author} {\bibfnamefont {M.}~\bibnamefont
  {Wright}},\ }\href {https://doi.org/10.1103/PhysRevD.92.104042} {\bibfield
  {journal} {\bibinfo  {journal} {Phys. Rev. D}\ }\textbf {\bibinfo {volume}
  {92}},\ \bibinfo {pages} {104042} (\bibinfo {year} {2015})},\ \Eprint
  {https://arxiv.org/abs/1508.05120} {arXiv:1508.05120 [gr-qc]} \BibitemShut
  {NoStop}%
%%CITATION = ARXIV:1508.05120;%%
\bibitem [{\citenamefont {Obukhov}\ and\ \citenamefont
  {Pereira}(2003)}]{Obukhov_Pereira2003}%
  \BibitemOpen
  \bibfield  {author} {\bibinfo {author} {\bibfnamefont {{\relax Yu}.~N.}\
  \bibnamefont {Obukhov}}\ and\ \bibinfo {author} {\bibfnamefont {J.~G.}\
  \bibnamefont {Pereira}},\ }\href {https://doi.org/10.1103/PhysRevD.67.044016}
  {\bibfield  {journal} {\bibinfo  {journal} {Phys. Rev. D}\ }\textbf {\bibinfo
  {volume} {67}},\ \bibinfo {pages} {044016} (\bibinfo {year} {2003})},\
  \Eprint {https://arxiv.org/abs/gr-qc/0212080} {arXiv:gr-qc/0212080 [gr-qc]}
  \BibitemShut {NoStop}%
%%CITATION = GR-QC/0212080;%%
\bibitem [{\citenamefont {Obukhov}\ and\ \citenamefont
  {Rubilar}(2006)}]{Obukhov_Rubilar2006}%
  \BibitemOpen
  \bibfield  {author} {\bibinfo {author} {\bibfnamefont {Y.~N.}\ \bibnamefont
  {Obukhov}}\ and\ \bibinfo {author} {\bibfnamefont {G.~F.}\ \bibnamefont
  {Rubilar}},\ }\href {https://doi.org/10.1103/PhysRevD.73.124017} {\bibfield
  {journal} {\bibinfo  {journal} {Phys. Rev. D}\ }\textbf {\bibinfo {volume}
  {73}},\ \bibinfo {pages} {124017} (\bibinfo {year} {2006})},\ \Eprint
  {https://arxiv.org/abs/gr-qc/0605045} {arXiv:gr-qc/0605045 [gr-qc]}
  \BibitemShut {NoStop}%
%%CITATION = GR-QC/0605045;%%
\bibitem [{\citenamefont {Lucas}\ \emph {et~al.}(2009)\citenamefont {Lucas},
  \citenamefont {Obukhov},\ and\ \citenamefont
  {Pereira}}]{Lucas_Obukhov_Pereira2009}%
  \BibitemOpen
  \bibfield  {author} {\bibinfo {author} {\bibfnamefont {T.~G.}\ \bibnamefont
  {Lucas}}, \bibinfo {author} {\bibfnamefont {Y.~N.}\ \bibnamefont {Obukhov}},\
  and\ \bibinfo {author} {\bibfnamefont {J.~G.}\ \bibnamefont {Pereira}},\
  }\href {https://doi.org/10.1103/PhysRevD.80.064043} {\bibfield  {journal}
  {\bibinfo  {journal} {Phys. Rev. D}\ }\textbf {\bibinfo {volume} {80}},\
  \bibinfo {pages} {064043} (\bibinfo {year} {2009})},\ \Eprint
  {https://arxiv.org/abs/0909.2418} {arXiv:0909.2418 [gr-qc]} \BibitemShut
  {NoStop}%
%%CITATION = ARXIV:0909.2418;%%
\bibitem [{\citenamefont {Aldrovandi}\ and\ \citenamefont
  {Pereira}(2013)}]{Aldrovandi_Pereira2013}%
  \BibitemOpen
  \bibfield  {author} {\bibinfo {author} {\bibfnamefont {R.}~\bibnamefont
  {Aldrovandi}}\ and\ \bibinfo {author} {\bibfnamefont {J.~G.}\ \bibnamefont
  {Pereira}},\ }\href {https://doi.org/10.1007/978-94-007-5143-9} {\emph
  {\bibinfo {title} {{Teleparallel Gravity}}}},\ \bibinfo {series} {Fundamental
  Theories of Physics}, Vol.\ \bibinfo {volume} {173}\ (\bibinfo  {publisher}
  {Springer},\ \bibinfo {address} {Dordrecht},\ \bibinfo {year}
  {2013})\BibitemShut {NoStop}%
%%CITATION = INSPIRE-1262972;%%
\bibitem [{\citenamefont {Krssak}\ \emph {et~al.}(2019)\citenamefont {Krssak},
  \citenamefont {Van Den~Hoogen}, \citenamefont {Pereira}, \citenamefont
  {Boehmer},\ and\ \citenamefont {Coley}}]{Krssak:2018ywd}%
  \BibitemOpen
  \bibfield  {author} {\bibinfo {author} {\bibfnamefont {M.}~\bibnamefont
  {Krssak}}, \bibinfo {author} {\bibfnamefont {R.~J.}\ \bibnamefont {Van
  Den~Hoogen}}, \bibinfo {author} {\bibfnamefont {J.~G.}\ \bibnamefont
  {Pereira}}, \bibinfo {author} {\bibfnamefont {C.~G.}\ \bibnamefont
  {Boehmer}},\ and\ \bibinfo {author} {\bibfnamefont {A.~A.}\ \bibnamefont
  {Coley}},\ }\href {https://doi.org/10.1088/1361-6382/ab2e1f} {\bibfield
  {journal} {\bibinfo  {journal} {Class. Quant. Grav.}\ }\textbf {\bibinfo
  {volume} {36}},\ \bibinfo {pages} {183001} (\bibinfo {year} {2019})},\
  \Eprint {https://arxiv.org/abs/1810.12932} {arXiv:1810.12932 [gr-qc]}
  \BibitemShut {NoStop}%
%%CITATION = ARXIV:1810.12932;%%
\bibitem [{\citenamefont {Kr\v{s}\v{s}\'{a}k}\ and\ \citenamefont
  {Saridakis}(2016)}]{Krssak_Saridakis2015}%
  \BibitemOpen
  \bibfield  {author} {\bibinfo {author} {\bibfnamefont {M.}~\bibnamefont
  {Kr\v{s}\v{s}\'{a}k}}\ and\ \bibinfo {author} {\bibfnamefont {E.~N.}\
  \bibnamefont {Saridakis}},\ }\href
  {https://doi.org/10.1088/0264-9381/33/11/115009} {\bibfield  {journal}
  {\bibinfo  {journal} {Class. Quant. Grav.}\ }\textbf {\bibinfo {volume}
  {33}},\ \bibinfo {pages} {115009} (\bibinfo {year} {2016})},\ \Eprint
  {https://arxiv.org/abs/1510.08432} {arXiv:1510.08432 [gr-qc]} \BibitemShut
  {NoStop}%
%%CITATION = ARXIV:1510.08432;%%
\bibitem [{\citenamefont {Kr\v{s}\v{s}\'{a}k}\ and\ \citenamefont
  {Pereira}(2015)}]{Krssak_Pereira2015}%
  \BibitemOpen
  \bibfield  {author} {\bibinfo {author} {\bibfnamefont {M.}~\bibnamefont
  {Kr\v{s}\v{s}\'{a}k}}\ and\ \bibinfo {author} {\bibfnamefont {J.~G.}\
  \bibnamefont {Pereira}},\ }\href
  {https://doi.org/10.1140/epjc/s10052-015-3749-2} {\bibfield  {journal}
  {\bibinfo  {journal} {Eur. Phys. J. C}\ }\textbf {\bibinfo {volume} {75}},\
  \bibinfo {pages} {519} (\bibinfo {year} {2015})},\ \Eprint
  {https://arxiv.org/abs/1504.07683} {arXiv:1504.07683 [gr-qc]} \BibitemShut
  {NoStop}%
%%CITATION = ARXIV:1504.07683;%%
\bibitem [{\citenamefont {Coley}\ \emph {et~al.}(2020)\citenamefont {Coley},
  \citenamefont {Van Den~Hoogen},\ and\ \citenamefont
  {McNutt}}]{Coley:2019zld}%
  \BibitemOpen
  \bibfield  {author} {\bibinfo {author} {\bibfnamefont {A.~A.}\ \bibnamefont
  {Coley}}, \bibinfo {author} {\bibfnamefont {R.~J.}\ \bibnamefont {Van
  Den~Hoogen}},\ and\ \bibinfo {author} {\bibfnamefont {D.~D.}\ \bibnamefont
  {McNutt}},\ }\href {https://doi.org/10.1063/5.0003252} {\bibfield  {journal}
  {\bibinfo  {journal} {J. Math. Phys.}\ }\textbf {\bibinfo {volume} {61}},\
  \bibinfo {pages} {072503} (\bibinfo {year} {2020})},\ \Eprint
  {https://arxiv.org/abs/1911.03893} {arXiv:1911.03893 [gr-qc]} \BibitemShut
  {NoStop}%
\bibitem [{\citenamefont {Olver}(1995)}]{olver1995}%
  \BibitemOpen
  \bibfield  {author} {\bibinfo {author} {\bibfnamefont {P.~J.}\ \bibnamefont
  {Olver}},\ }\href@noop {} {\emph {\bibinfo {title} {Equivalence, invariants
  and symmetry}}}\ (\bibinfo  {publisher} {Cambridge University Press},\
  \bibinfo {year} {1995})\BibitemShut {NoStop}%
\bibitem [{\citenamefont {{Coley}}\ \emph {et~al.}(2004)\citenamefont
  {{Coley}}, \citenamefont {{Milson}}, \citenamefont {{Pravda}},\ and\
  \citenamefont {{Pravdov{\'a}}}}]{Higher}%
  \BibitemOpen
  \bibfield  {author} {\bibinfo {author} {\bibfnamefont {A.}~\bibnamefont
  {{Coley}}}, \bibinfo {author} {\bibfnamefont {R.}~\bibnamefont {{Milson}}},
  \bibinfo {author} {\bibfnamefont {V.}~\bibnamefont {{Pravda}}},\ and\
  \bibinfo {author} {\bibfnamefont {A.}~\bibnamefont {{Pravdov{\'a}}}},\ }\href
  {https://doi.org/10.1088/0264-9381/21/23/014} {\bibfield  {journal} {\bibinfo
   {journal} {Class. Quant. Grav.}\ }\textbf {\bibinfo {volume} {21}},\
  \bibinfo {pages} {5519} (\bibinfo {year} {2004})},\ \Eprint
  {https://arxiv.org/abs/0410070} {arXiv:0410070 [gr-qc]} \BibitemShut
  {NoStop}%
\bibitem [{\citenamefont {Coley}\ \emph {et~al.}(2009)\citenamefont {Coley},
  \citenamefont {Hervik},\ and\ \citenamefont {Pelavas}}]{Coley:2009eb}%
  \BibitemOpen
  \bibfield  {author} {\bibinfo {author} {\bibfnamefont {A.}~\bibnamefont
  {Coley}}, \bibinfo {author} {\bibfnamefont {S.}~\bibnamefont {Hervik}},\ and\
  \bibinfo {author} {\bibfnamefont {N.}~\bibnamefont {Pelavas}},\ }\href
  {https://doi.org/10.1088/0264-9381/26/2/025013} {\bibfield  {journal}
  {\bibinfo  {journal} {Class. Quant. Grav.}\ }\textbf {\bibinfo {volume}
  {26}},\ \bibinfo {pages} {025013} (\bibinfo {year} {2009})},\ \Eprint
  {https://arxiv.org/abs/0901.0791} {arXiv:0901.0791 [gr-qc]} \BibitemShut
  {NoStop}%
%%CITATION = ARXIV:0901.0791;%%
\bibitem [{\citenamefont {Ortaggio}\ \emph {et~al.}(2012)\citenamefont
  {Ortaggio}, \citenamefont {Pravda},\ and\ \citenamefont
  {Pravdov{\'a}}}]{OrtaggioPravdaPravdova:2013}%
  \BibitemOpen
  \bibfield  {author} {\bibinfo {author} {\bibfnamefont {M.}~\bibnamefont
  {Ortaggio}}, \bibinfo {author} {\bibfnamefont {V.}~\bibnamefont {Pravda}},\
  and\ \bibinfo {author} {\bibfnamefont {A.}~\bibnamefont {Pravdov{\'a}}},\
  }\href@noop {} {\bibfield  {journal} {\bibinfo  {journal} {Class. Quant.
  Grav.}\ }\textbf {\bibinfo {volume} {30}},\ \bibinfo {pages} {013001}
  (\bibinfo {year} {2012})},\ \Eprint {https://arxiv.org/abs/1211.7289}
  {arXiv:1211.7289} \BibitemShut {NoStop}%
\bibitem [{\citenamefont {Hervik}\ \emph {et~al.}(2015)\citenamefont {Hervik},
  \citenamefont {Haarr},\ and\ \citenamefont {Yamamoto}}]{HHY}%
  \BibitemOpen
  \bibfield  {author} {\bibinfo {author} {\bibfnamefont {S.}~\bibnamefont
  {Hervik}}, \bibinfo {author} {\bibfnamefont {A.}~\bibnamefont {Haarr}},\ and\
  \bibinfo {author} {\bibfnamefont {K.}~\bibnamefont {Yamamoto}},\ }\href@noop
  {} {\bibfield  {journal} {\bibinfo  {journal} {J. Geom. Phys.}\ }\textbf
  {\bibinfo {volume} {98}},\ \bibinfo {pages} {384} (\bibinfo {year} {2015})},\
  \Eprint {https://arxiv.org/abs/1410.4347} {arXiv:1410.4347} \BibitemShut
  {NoStop}%
\bibitem [{\citenamefont {{Coley}}\ \emph {et~al.}(2009)\citenamefont
  {{Coley}}, \citenamefont {{Hervik}}, \citenamefont {{Papadopoulos}},\ and\
  \citenamefont {{Pelavas}}}]{CHPP2009}%
  \BibitemOpen
  \bibfield  {author} {\bibinfo {author} {\bibfnamefont {A.}~\bibnamefont
  {{Coley}}}, \bibinfo {author} {\bibfnamefont {S.}~\bibnamefont {{Hervik}}},
  \bibinfo {author} {\bibfnamefont {G.}~\bibnamefont {{Papadopoulos}}},\ and\
  \bibinfo {author} {\bibfnamefont {N.}~\bibnamefont {{Pelavas}}},\ }\href
  {https://doi.org/10.1088/0264-9381/26/10/105016} {\bibfield  {journal}
  {\bibinfo  {journal} {Class. Quant. Grav.}\ }\textbf {\bibinfo {volume}
  {26}},\ \bibinfo {eid} {105016} (\bibinfo {year} {2009})},\ \Eprint
  {https://arxiv.org/abs/0901.0394} {arXiv:0901.0394} \BibitemShut {NoStop}%
\bibitem [{\citenamefont {Hehl}\ \emph {et~al.}(1995)\citenamefont {Hehl},
  \citenamefont {McCrea}, \citenamefont {Mielke},\ and\ \citenamefont
  {Ne'eman}}]{Hehl_McCrea_Mielke_Neeman1995}%
  \BibitemOpen
  \bibfield  {author} {\bibinfo {author} {\bibfnamefont {F.~W.}\ \bibnamefont
  {Hehl}}, \bibinfo {author} {\bibfnamefont {J.~D.}\ \bibnamefont {McCrea}},
  \bibinfo {author} {\bibfnamefont {E.~W.}\ \bibnamefont {Mielke}},\ and\
  \bibinfo {author} {\bibfnamefont {Y.}~\bibnamefont {Ne'eman}},\ }\href
  {https://doi.org/10.1016/0370-1573(94)00111-F} {\bibfield  {journal}
  {\bibinfo  {journal} {Physics Reports}\ }\textbf {\bibinfo {volume} {258}},\
  \bibinfo {pages} {1} (\bibinfo {year} {1995})},\ \Eprint
  {https://arxiv.org/abs/gr-qc/9402012} {arXiv:gr-qc/9402012 [gr-qc]}
  \BibitemShut {NoStop}%
%%CITATION = GR-QC/9402012;%%
\bibitem [{\citenamefont {Stephani}\ \emph {et~al.}(2009)\citenamefont
  {Stephani}, \citenamefont {Kramer}, \citenamefont {MacCallum}, \citenamefont
  {Hoenselaers},\ and\ \citenamefont {Herlt}}]{kramer}%
  \BibitemOpen
  \bibfield  {author} {\bibinfo {author} {\bibfnamefont {H.}~\bibnamefont
  {Stephani}}, \bibinfo {author} {\bibfnamefont {D.}~\bibnamefont {Kramer}},
  \bibinfo {author} {\bibfnamefont {M.}~\bibnamefont {MacCallum}}, \bibinfo
  {author} {\bibfnamefont {C.}~\bibnamefont {Hoenselaers}},\ and\ \bibinfo
  {author} {\bibfnamefont {E.}~\bibnamefont {Herlt}},\ }\href@noop {} {\emph
  {\bibinfo {title} {Exact solutions of Einstein's field equations}}}\
  (\bibinfo  {publisher} {Cambridge University Press},\ \bibinfo {year}
  {2009})\BibitemShut {NoStop}%
\bibitem [{\citenamefont {Hervik}(2011)}]{Hervik2011}%
  \BibitemOpen
  \bibfield  {author} {\bibinfo {author} {\bibfnamefont {S.}~\bibnamefont
  {Hervik}},\ }\href@noop {} {\bibfield  {journal} {\bibinfo  {journal} {Class.
  Quant. Grav.}\ }\textbf {\bibinfo {volume} {28}},\ \bibinfo {pages} {215009}
  (\bibinfo {year} {2011})},\ \Eprint {https://arxiv.org/abs/1109.2551}
  {arXiv:1109.2551} \BibitemShut {NoStop}%
\bibitem [{\citenamefont {Nunes}\ \emph {et~al.}(2017)\citenamefont {Nunes},
  \citenamefont {Bonilla}, \citenamefont {Pan},\ and\ \citenamefont
  {Saridakis}}]{Nunes:2016plz}%
  \BibitemOpen
  \bibfield  {author} {\bibinfo {author} {\bibfnamefont {R.~C.}\ \bibnamefont
  {Nunes}}, \bibinfo {author} {\bibfnamefont {A.}~\bibnamefont {Bonilla}},
  \bibinfo {author} {\bibfnamefont {S.}~\bibnamefont {Pan}},\ and\ \bibinfo
  {author} {\bibfnamefont {E.~N.}\ \bibnamefont {Saridakis}},\ }\href
  {https://doi.org/10.1140/epjc/s10052-017-4798-5} {\bibfield  {journal}
  {\bibinfo  {journal} {Eur. Phys. J.}\ }\textbf {\bibinfo {volume} {C77}},\
  \bibinfo {pages} {230} (\bibinfo {year} {2017})},\ \Eprint
  {https://arxiv.org/abs/1608.01960} {arXiv:1608.01960 [gr-qc]} \BibitemShut
  {NoStop}%
%%CITATION = ARXIV:1608.01960;%%
\bibitem [{\citenamefont {Nunes}\ \emph {et~al.}(2016)\citenamefont {Nunes},
  \citenamefont {Pan},\ and\ \citenamefont {Saridakis}}]{Nunes:2016qyp}%
  \BibitemOpen
  \bibfield  {author} {\bibinfo {author} {\bibfnamefont {R.~C.}\ \bibnamefont
  {Nunes}}, \bibinfo {author} {\bibfnamefont {S.}~\bibnamefont {Pan}},\ and\
  \bibinfo {author} {\bibfnamefont {E.~N.}\ \bibnamefont {Saridakis}},\ }\href
  {https://doi.org/10.1088/1475-7516/2016/08/011} {\bibfield  {journal}
  {\bibinfo  {journal} {J. Cosmol. Astropart. Phys.}\ }\textbf {\bibinfo
  {volume} {1608}}\bibfield  {number} {\bibinfo  {number} { (08)},\ \bibinfo
  {pages} {011}},\ }\Eprint {https://arxiv.org/abs/1606.04359}
  {arXiv:1606.04359 [gr-qc]} \BibitemShut {NoStop}%
%%CITATION = ARXIV:1606.04359;%%
\bibitem [{\citenamefont {Capozziello}\ \emph {et~al.}(2015)\citenamefont
  {Capozziello}, \citenamefont {Luongo},\ and\ \citenamefont
  {Saridakis}}]{Capozziello:2015rda}%
  \BibitemOpen
  \bibfield  {author} {\bibinfo {author} {\bibfnamefont {S.}~\bibnamefont
  {Capozziello}}, \bibinfo {author} {\bibfnamefont {O.}~\bibnamefont
  {Luongo}},\ and\ \bibinfo {author} {\bibfnamefont {E.~N.}\ \bibnamefont
  {Saridakis}},\ }\href {https://doi.org/10.1103/PhysRevD.91.124037} {\bibfield
   {journal} {\bibinfo  {journal} {Phys. Rev.}\ }\textbf {\bibinfo {volume}
  {D91}},\ \bibinfo {pages} {124037} (\bibinfo {year} {2015})},\ \Eprint
  {https://arxiv.org/abs/1503.02832} {arXiv:1503.02832 [gr-qc]} \BibitemShut
  {NoStop}%
%%CITATION = ARXIV:1503.02832;%%
\bibitem [{\citenamefont {Hohmann}\ \emph {et~al.}(2018)\citenamefont
  {Hohmann}, \citenamefont {Kr\v{s}\v{s}\'ak}, \citenamefont {Pfeifer},\ and\
  \citenamefont {Ualikhanova}}]{Hohmann:2018jso}%
  \BibitemOpen
  \bibfield  {author} {\bibinfo {author} {\bibfnamefont {M.}~\bibnamefont
  {Hohmann}}, \bibinfo {author} {\bibfnamefont {M.}~\bibnamefont
  {Kr\v{s}\v{s}\'ak}}, \bibinfo {author} {\bibfnamefont {C.}~\bibnamefont
  {Pfeifer}},\ and\ \bibinfo {author} {\bibfnamefont {U.}~\bibnamefont
  {Ualikhanova}},\ }\href {https://doi.org/10.1103/PhysRevD.98.124004}
  {\bibfield  {journal} {\bibinfo  {journal} {Phys. Rev. D}\ }\textbf {\bibinfo
  {volume} {98}},\ \bibinfo {pages} {124004} (\bibinfo {year} {2018})},\
  \Eprint {https://arxiv.org/abs/1807.04580} {arXiv:1807.04580 [gr-qc]}
  \BibitemShut {NoStop}%
\bibitem [{\citenamefont {Kofinas}\ and\ \citenamefont
  {Saridakis}(2014)}]{Kofinas:2014owa}%
  \BibitemOpen
  \bibfield  {author} {\bibinfo {author} {\bibfnamefont {G.}~\bibnamefont
  {Kofinas}}\ and\ \bibinfo {author} {\bibfnamefont {E.~N.}\ \bibnamefont
  {Saridakis}},\ }\href {https://doi.org/10.1103/PhysRevD.90.084044} {\bibfield
   {journal} {\bibinfo  {journal} {Phys. Rev. D}\ }\textbf {\bibinfo {volume}
  {90}},\ \bibinfo {pages} {084044} (\bibinfo {year} {2014})},\ \Eprint
  {https://arxiv.org/abs/1404.2249} {arXiv:1404.2249 [gr-qc]} \BibitemShut
  {NoStop}%
%%CITATION = ARXIV:1404.2249;%%
\end{thebibliography}%

\end{document}